\newtheorem{theorem}{Theorem}
\newtheorem{definition}{Definition}
\newtheorem{lemma}{Lemma}
\newtheorem{example}{Example}
\newcommand*{\Scale}[2][4]{\scalebox{#1}{$#2$}}
\newcommand{\blind}{1}
\begin{document}

\def\spacingset#1{\renewcommand{\baselinestretch}%
{#1}\small\normalsize} \spacingset{1}


\if1\blind
{
  \title{\bf Conformal prediction intervals for the individual treatment effect}
  \author{Danijel Kivaranovic\textsuperscript{a} \quad Robin Ristl\textsuperscript{b} \quad Martin Posch\textsuperscript{b} \quad Hannes Leeb\textsuperscript{a,c} \\ \\
 \textsuperscript{a}Department of Statistics and Operations Research, University of Vienna  \\
 \textsuperscript{b}Center for Medical Statistics, Informatics and Intelligent Systems,\\Medical University of Vienna \\
 \textsuperscript{c}Data Science @ Uni Vienna, University of Vienna
  }
  \date{}
  \maketitle
} \fi

\if0\blind
{
  \bigskip
  \bigskip
  \bigskip
    \title{\bf Conformal prediction intervals for the individual treatment effect}
  \date{}
  \maketitle
  \medskip
} \fi

\date{}


\bigskip
\begin{abstract}
We propose several prediction intervals procedures for the individual treatment effect with either finite-sample or asymptotic coverage guarantee in a non-parametric regression setting, where non-linear regression functions, heteroskedasticity and non-Gaussianity are allowed. The construct the prediction intervals we use the conformal method of \cite{vovk2005}. In extensive simulations, we compare the  coverage probability and interval length of our prediction interval procedures. We demonstrate that complex learning algorithms, such as neural networks, can lead to narrower prediction intervals than simple algorithms, such as linear regression, if the sample size is large enough. 
\end{abstract}

\noindent%
{\it Keywords:}  Conformal inference, individual treatment effect, machine learning.
\vfill

\newpage
\spacingset{1.45} 

\section{Introduction}
In usual randomized controlled clinical trials, the average treatment effect, averaged over the population under study, is estimated and used as predictor for the expected treatment effect in a new patient.
However, treatment effects may depend on individual patient characteristics and the personalized medicine paradigm aims to tailor treatment decisions to these characteristics \citep{pellegrini2019proof,graf2020optimized}. When comparing an experimental treatment to a control treatment, the relevant quantity to establish a personalized treatment decision is the so called \textit{individual treatment effect}, defined as the difference between the patient's potential outcomes under treatment and under control \citep{rubin1974estimating,rubin2005causal}. In this paper, we propose several prediction intervals for the individual treatment effect with either finite-sample or asymptotic coverage guarantee. Our intervals are valid in a non-parametric regression setting, where non-linear regression functions, heteroskedasticity and non-Gaussianity are allowed.
The construction of our prediction intervals relies on the conformal method of \cite{vovk2005,shafer2008}.


To predict individual treatment effects based on a set of co-variates, regression models for, both, the response under treatment and under control can be fitted. This can be achieved, for example,  if interaction terms for the treatment group indicator and the considered predictive variables are included or if separate regression models  for each treatment group are fitted. Alternatively, more complex non-parametric or machine-learning models can be applied to model non-linear associations \citep{lamont2018identification}.
The expected individual treatment effect can then be estimated as difference of predictions under treatment and under control, given a patient's co-variate values. 

In order to reliably apply this concept for decision making, the uncertainty attached to an estimated individual treatment effect needs to quantified. \cite{ballarini2018subgroup} studied the application of confidence intervals for expected individual treatment effects to select subgroups with improved treatment benefit. 
While this approach covers uncertainty due to the prediction model being estimated from a finite training sample, a complete assessment of uncertainty in an individual prediction needs to take into account the residual distribution of the effect, which comprises unexplained between-patient variability and within-patient variability.  (The latter reflects the possibility that the same patient may respond differently to a treatment at different occasions; but only in rare settings which allow for repeated observations of the effect in the same patient these sources of variability may be identified separately \citep{senn2016mastering}.)

A suitable way to communicate the extent of remaining uncertainty is in terms of prediction intervals. 
The calculation of individual prediction intervals from models for the individual treatment effect is complicated by the fact that usually each patient in the training data set is observed under one of the two treatment conditions only. Hence, residuals of the individual treatment effect cannot be observed.
To overcome this problem, in this manuscript we propose a method to calculate 
prediction intervals for the individual treatment effect by an extension of the conformal inference framework towards the difference in hypothetical outcomes under two treatment conditions. Conformal inference allows to estimate intervals in the residual distribution of an arbitrary prediction model without distributional assumptions and take into account uncertainty in the model estimation procedure. The resulting intervals are
exact
in the sense that the average coverage probability across the studied population is controlled.

In Section \ref{sec_theorems} we introduce the notation and present our main results: Theorem \ref{th_te1} and \ref{th_te1_b} provide prediction intervals with finite-sample coverage, and Theorem \ref{th_te2} provides prediction intervals with asymptotic coverage. Theorem \ref{th_te1} and \ref{th_te1_b} have weaker assumptions but are therefore more conservative. Theorem \ref{th_te2} has stronger assumptions, namely, consistency of the prediction procedure and Gaussian error distribution but provides considerably shorter intervals. All three theorems assume the existence of a prediction intervals procedure for the outcome conditional on the assigned treatment. In Section \ref{sec_conf_pi}, we show how to construct such a prediction interval procedure using the conformal method. In Section \ref{sec_sim} we compare the intervals in several simulation settings. The paper ends with a discussion in Section \ref{sec_discussion}. The proofs of all results are given in the appendix.

\section{Prediction intervals for the individual treatment effect} \label{sec_theorems}

\subsection{Notation and definitions}

We denote by $X$ the random covariate vector with values in $\mathbb R^d$, $d \in \mathbb N$. Let $T$ be the $\{-1,1\}$-valued random variable that indicates whether a patient was assigned to the treatment group ($T=1$) or control group ($T=-1$). We assume that the random variables $X$ and $T$ are independent. This means, patients are assigned to treatment or control group independent of their condition, which is described by $X$. (We note that this setting excludes stratified sampling, because it would introduce a dependence structure between $X$ and $T$.) Let $Y$ denote the real-valued outcome of interest. We can decompose $Y$ into
\begin{equation*}
  Y = f(X,T) + e_{X,T},
\end{equation*}
where $f(X,T) = \mathbb{E}(Y|X,T)$ is called the regression function and $e_{X,T}=Y-\mathbb{E}(Y|X,T)$ the error. The $X$ and $T$ in the subscript of $e_{X,T}$, denote the dependence of $e_{X,T}$ on $X$ and $T$. Each patient is only assigned to one of the two treatment groups, which means that the random variable
\begin{equation*}
  Y' = f(X,-T) + e_{X,-T}
\end{equation*}
is not observable. Note that $Y'$ denotes the outcome of the patient assigned to the opposite treatment group. The individual treatment effect $\tau(X)$ is defined as $T(Y-Y')$ or, equivalently,
\begin{equation*}
  \tau(X) = f(X,1) - f(X,-1) + e_{X,1} - e_{X,-1}.
\end{equation*}
Clearly, because $Y'$ is not observable, $\tau(X)$ is also not observable. We note that $\tau(X)$ is not to be confused with $f(X,1) - f(X,-1)$ which is also sometimes referred to as (predicted) individual treatment effect in the literature. Conditional on $X$, $f(X,1) - f(X,-1)$ is deterministic, while $\tau(X)$ is still random because of the error $e_{X,1} - e_{X,-1}$. This also means that, in a well-defined setting, the length of confidence intervals for $f(X,1) - f(X,-1)$ converge to 0 as sample size grows while the length of prediction intervals for $\tau(X)$ cannot converge to 0 because the variance of $e_{X,1} - e_{X,-1}$ is not explainable. In Theorem $\ref{th_te1}$ we make no assumption about the distribution of $e_{X,1}$ and $e_{X,-1}$, while Theorem \ref{th_te1_b} requires conditional independence and Theorem \ref{th_te2} conditional positive correlation given $X$, respectively. We note that conditional independence between $e_{X,1}$ and $e_{X,-1}$ given $X$ is often considered as worst-case scenario, because unobserved explanatory variables typically cause a positive correlation structure between the errors.


Let $D_n = ((X_i,Y_i,T_i))_{1 \leq i \leq n}$ denote a dataset of $n \in \mathbb N$ observations, where the $(X_i,T_i,Y_i)$s are i.i.d. copies of $(X,Y,T)$. We only consider the non-degenerate case where $0<\mathbb P \left(T=1 \right) < 1$. We denote by $\mathbb D^n = (\mathbb R^d \times \mathbb R \times \{-1,1\})^n$ the range of $D_n$. Throughout this paper, $(X,Y,T)$ denotes a new observation that is independent of $D_n$. Let $\alpha \in (0,1)$ be the confidence level. We cannot define a prediction interval procedure for the individual treatment effect $\tau(X)$ based on the unobservable vector $(\tau(X_1), \dots, \tau(X_n))'$. Therefore, we propose to construct a prediction interval $[l_{D_n}(X,-1), u_{D_n}(X,-1)]$ for $Y$ conditional on the event $\{T=-1\}$ and another prediction interval $[l_{D_n}(X,1), u_{D_n}(X,1)]$ for $Y$ conditional on the event $\{T=1\}$. Then, we combine these two prediction intervals to obtain a prediction interval for $\tau(X)$. In Section \ref{sec_conf_pi}, we show how to modify the conformal method (\cite{vovk2005,shafer2008}) to construct prediction intervals for $Y$ with coverage probability $1-\alpha$ conditional on $\{T=-1\}$ and $\{T=1\}$, respectively. The construction of these intervals merely requires i.i.d.\ data and no further assumption on the distribution of the data or the learning algorithm.

The accuracy of the proposed procedure (in terms of coverage probability and interval length), depends on how accurately we can estimate the regression function $f(x,t)$ and how accurately we can estimate the conditional distribution of $e_{X,T}$ conditional on $X$ and $T$. 
\begin{definition} \label{def_pred}
  A prediction procedure $\mathcal A_n$ is a mapping from $\mathbb D^n$ into the set of all measurable functions $\{f:\mathbb R^d \times \{-1,1\} \to \mathbb R\}$. We set
  \begin{equation*}
    f_{D_n} = \mathcal A_n(D_n).
  \end{equation*}
  We say that $\mathcal A_n$ is a consistent prediction procedure if for any compact set $K \subseteq \mathbb R^d$ and any $\epsilon > 0$
  \begin{equation*}
    \mathbb P \left( \sup_{(x,t) \in K \times \{-1,1\}} |f_{D_n}(x,t) - f(x,t)| < \epsilon \right) \to 1 \quad \text{as } n \to \infty.
  \end{equation*}
\end{definition}
Typically, the efficacy of the treatment $t$ interacts with the covariate vector $x$. In the following, we give examples of prediction procedures $\mathcal A_n$ that model these interactions. Furthermore, we discuss when these procedures are consistent in the sense of Definition \ref{def_pred}:
\begin{enumerate}[(i)]
    \item The least squares method is the most common and most frequently used prediction procedure. The function $f_{D_n}$ is of the form 
    \begin{equation*}
      f_{D_n}(x,t) = \hat \beta_{0} + \hat \beta_{1} t + \sum_{i=1}^d \hat \gamma_{i} x_i + t \sum_{i=1}^d \hat \delta_{i} x_i,
    \end{equation*}
    where the $\hat \beta$s, $\hat \gamma$s and $\hat \delta$s are chosen by minimizing the mean squared error on the dataset $D_n$. Interactions between $x$ and $t$ are explicitly modeled by the last sum. The linear least squares method is only consistent if the regression function $f(x,t)$ is a linear function of $x$ and $t$ and the components of $x$ only interact with $t$.
    \item Kernel regression is a non-parametric prediction procedure. The function $f_{D_n}$ is defined as
    \begin{equation*}
      f_{D_n}(x,t) = \sum_{i=1}^n Y_i \hat w_{i}(x,t),
    \end{equation*}
    where the $\hat w_{i}$s are $D_n$-depended weight functions such that $\hat w_{i}(x,t)$ is large if $(x,t)$ is close to $(X_i,T_i)$ and small if $(x,t)$ is far from $(X_i,T_i)$. Interaction between $x$ and $t$ are implicitly modeled by the data-dependent weight functions. Sufficient conditions for consistency of kernel regression methods are given in \cite{stone1977}.
    \item The neural network is a prediction procedure that has recently gained a lot of attention because of its predictive performance. A standard neural network $f_{D_n}$ with one hidden layer, with $k \in \mathbb N$ nodes and non-linear activation function $\sigma: \mathbb R \to \mathbb R$ is defined as 
    \begin{align*}
    f_{D_n}(x,t) = \hat W_2 ~ \sigma(\hat W_1 (x',t)' + \hat b_1) + \hat b_2,
    \end{align*} 
    where $\hat W_{1}$ is a $D_n$-dependent $k \times (d+1)$ weight matrix;  $\hat b_{1}$ is a $D_n$-dependent $k$-dimensional bias vector; $\sigma$ is applied component-wise; $\hat W_{2}$ is a $D_n$-dependent $1 \times k$ weight matrix and $\hat b_2$ is a $D_n$-dependent $1$-dimensional bias vector. A neural network is called deep, if the number of hidden layers and hidden nodes is large. The larger the network, the more interactions between $x$ and $t$ can be modeled. Neural networks are widely known as universal approximators \citep{cybenko1989,hornik1989} and, more recently, consistency properties from statistical perspective have been analyzed by \cite{bauer2019}.
\end{enumerate}
Note that if the sample size $n$ is small, some prediction procedures are not well defined (e.g. if the number of parameters in a linear regression model is larger than $n$). In such cases, $f_{D_n}$ is simply to be interpreted as the $0$ function.

\begin{definition} \label{def_pi}
  An interval procedure $\mathcal P_n$ is a mapping from $ \mathbb D^n$ into the set of all pairs of ordered measurable functions $\{(l,u): l,u:\mathbb R^d \times \{-1,1\} \to \mathbb R, ~ l \leq u\}$. We set
  \begin{equation*}
    (l_{D_n}, ~ u_{D_n}) = \mathcal P_n(D_n).
  \end{equation*}
  We say that $\mathcal P_n$ is a $T$-conditional prediction-interval procedure at level $1-\alpha$, if for all $n \in \mathbb N$ 
  \begin{equation*}
    \mathbb P\left( Y \in [l_{D_n}(X,T), u_{D_n}(X,T)]  ~ | ~ T=t \right ) \geq 1-\alpha, \quad t \in \{-1,1\}.
  \end{equation*}
\end{definition}

The standard conformal method allows to construct unconditional prediction intervals only under the assumption that the data is i.i.d.. In Section \ref{sec_conf_pi}, we show how to modify the conformal method to obtain a $T$-conditional prediction-interval procedure in the sense of Definition \ref{def_pi}. 

\subsection{Theoretical results}

We propose prediction intervals for $\tau(X)$ that either control finite sample coverage or asymptotic coverage.

\begin{definition} \label{def_pi_treat}
    Let $\Gamma_{D_n}: \mathbb R^d \to \{[a,b]: a \leq b\}$ be an interval-valued function that depends on $D_n$. $\Gamma_{D_n}(X)$  is called a prediction interval for the individual treatment effect $\tau(X)$ at level $1-\alpha$, if for all $n \in \mathbb N$
    \begin{equation} \label{ineq1}
        \mathbb P(\tau(X) \in \Gamma_{D_n}(X)) \geq 1-\alpha.
    \end{equation}
    We say that $\Gamma_{D_n}(X)$ is an asymptotically valid prediction interval for $\tau(X)$ at level $1-\alpha$, if
    \begin{equation} \label{ineq2}
        \lim_{n\to\infty} \mathbb P(\tau(X) \in \Gamma_{D_n}(X)) \geq 1-\alpha.
    \end{equation}
\end{definition}

In all of the following results, we assume that we have a prediction $T$-conditional prediction-interval procedure $\mathcal P_n$ in the sense of Definition $\ref{def_pi}$. Construction of such a procedure is described in Section \ref{sec_conf_pi}. We note that the following results hold for any $T$-conditional prediction-interval procedure $\mathcal P_n$ (not necessarily based on the method proposed in Section \ref{sec_conf_pi}). 

The first result only requires the i.i.d. assumption and a $T$-conditional prediction-interval procedure at level $1-\alpha/2$.
\begin{theorem} \label{th_te1}
  Let $\mathcal P_n$ be a $T$-conditional prediction-interval procedure at level $1-\alpha/2$ in the sense of Definition \ref{def_pi}. Set $(l_{D_n},u_{D_n})=\mathcal P_n(D_n)$ and
  \begin{equation} \label{int1}
    \Gamma_{D_n} (X) = \left[l_{D_n}(X,1) - u_{D_n}(X,-1) ,u_{D_n}(X,1) - l_{D_n}(X,-1) \right].
\end{equation}
     Then $\Gamma_{D_n} (X)$ is a prediction interval for $\tau(X)$ at level $1-\alpha$.
\end{theorem}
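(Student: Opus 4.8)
The plan is to exploit the representation of the individual treatment effect through the two potential outcomes and then reduce simultaneous coverage of both to the two marginal coverage guarantees supplied by $\mathcal P_n$. Write $Y^{(t)} = f(X,t) + e_{X,t}$ for the potential outcome under treatment $t \in \{-1,1\}$, so that the observed response is $Y = Y^{(T)}$, the counterfactual is $Y' = Y^{(-T)}$, and, by the definition of $\tau$ given above, $\tau(X) = Y^{(1)} - Y^{(-1)}$ holds identically (irrespective of the value of $T$). Set $C_t = [l_{D_n}(X,t), u_{D_n}(X,t)]$ for $t \in \{-1,1\}$.

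First I would establish the deterministic event inclusion
\[
\{Y^{(1)} \in C_1\} \cap \{Y^{(-1)} \in C_{-1}\} \subseteq \{\tau(X) \in \Gamma_{D_n}(X)\}.
\]
This is immediate from the definition of $\Gamma_{D_n}(X)$ in \eqref{int1}: on the left-hand event the bounds $l_{D_n}(X,1) \leq Y^{(1)} \leq u_{D_n}(X,1)$ and $l_{D_n}(X,-1) \leq Y^{(-1)} \leq u_{D_n}(X,-1)$ hold, and combining the extreme cases gives $l_{D_n}(X,1) - u_{D_n}(X,-1) \leq Y^{(1)} - Y^{(-1)} \leq u_{D_n}(X,1) - l_{D_n}(X,-1)$, i.e. $\tau(X) \in \Gamma_{D_n}(X)$. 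Hence it suffices to lower bound $\mathbb P(Y^{(1)} \in C_1,\, Y^{(-1)} \in C_{-1})$.

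Next I would convert the $T$-conditional guarantee into an \emph{unconditional} coverage statement for each potential outcome. For $t=1$ the defining inequality of Definition \ref{def_pi} at level $1-\alpha/2$ reads $\mathbb P(Y \in C_1 \mid T=1) \geq 1-\alpha/2$, and on $\{T=1\}$ we have $Y = Y^{(1)}$, so $\mathbb P(Y^{(1)} \in C_1 \mid T=1) \geq 1-\alpha/2$. The event $\{Y^{(1)} \in C_1\}$ is a measurable function of $(X, e_{X,1}, D_n)$ only; since the treatment assignment $T$ is independent of the covariates and errors (randomization) and $D_n$ is independent of the new observation, this event is independent of $T$, so the conditional probability equals the unconditional one and $\mathbb P(Y^{(1)} \in C_1) \geq 1-\alpha/2$. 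The symmetric argument at $t=-1$ gives $\mathbb P(Y^{(-1)} \in C_{-1}) \geq 1-\alpha/2$. Finally, a Bonferroni bound yields $\mathbb P(Y^{(1)} \in C_1,\, Y^{(-1)} \in C_{-1}) \geq 1 - \mathbb P(Y^{(1)} \notin C_1) - \mathbb P(Y^{(-1)} \notin C_{-1}) \geq 1-\alpha$, and the event inclusion of the previous step then gives $\mathbb P(\tau(X) \in \Gamma_{D_n}(X)) \geq 1-\alpha$.

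I expect the main obstacle to be the transport of coverage rather than the arithmetic. The guarantee for $C_{-1}$ is stated only in the subpopulation $\{T=-1\}$, yet for a patient with $T=1$ the outcome $Y^{(-1)}$ is counterfactual and never observed, so its coverage cannot be checked ``directly''; it is precisely the independence of $T$ from $(X, e_{X,1}, e_{X,-1})$ that lets one carry the guarantee from the $\{T=-1\}$ subpopulation to the whole population (and symmetrically for $C_1$). Making this independence explicit is the one delicate point, and it is slightly stronger than the $X \perp T$ assumption spelled out in the text — it is the usual randomization assumption that treatment is assigned independently of the patient's (potential) responses. Once this is in place, both the event inclusion and the union bound are routine, and no assumption whatsoever on the joint law of $e_{X,1}$ and $e_{X,-1}$ given $X$ is needed, which is exactly why Theorem \ref{th_te1} is the most general but most conservative of the three results.
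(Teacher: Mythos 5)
Your proof is correct and follows essentially the same route as the paper's: the same event inclusion $\{Y^{(1)} \in C_1\} \cap \{Y^{(-1)} \in C_{-1}\} \subseteq \{\tau(X) \in \Gamma_{D_n}(X)\}$, the same transport of the $T$-conditional guarantee to an unconditional one via independence of $T$ from the new observation and $D_n$, and the same Bonferroni bound. Your closing remark is also a fair observation: the paper justifies the key step only by ``$T$ is independent of $D_n$ and $X$,'' whereas (as you note) the argument really uses independence of $T$ from the error terms $e_{X,\pm 1}$ as well, i.e.\ the randomization assumption implicit in the paper's potential-outcomes setup.
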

In this result, $\Gamma_{D_n} (X)$ controls coverage at level $1-\alpha$, but $\mathcal P_n$ is a $T$-conditional prediction-interval procedure at level $1-\alpha/2$. We can improve this result (in terms of interval length) by assuming conditional independence of the errors $e_{X,-1}$ and $e_{X,1}$ given $X$. Under this additional assumption, a $T$-conditional prediction-interval procedure at level $\sqrt{1-\alpha}$ (which is strictly smaller than $1-\alpha/2$ for all $\alpha \in (0,1))$ is sufficient such that inequality $(\ref{ineq1})$ continues to hold.
\begin{theorem} \label{th_te1_b}
  Let $\mathcal P_n$ be a $T$-conditional prediction-interval procedure at level $\sqrt{1-\alpha}$ in the sense of Definition \ref{def_pi}. Assume that $e_{X,-1}$ and $e_{X,1}$ are conditionally independent given $X$. Set $(l_{D_n},u_{D_n})=\mathcal P_n(D_n)$ and let $\Gamma_{D_n} (X)$ be defined as in (\ref{int1}). Then $\Gamma_{D_n} (X)$ is a prediction interval for $\tau(X)$ at level $1-\alpha$.
\end{theorem}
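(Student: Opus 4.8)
The plan is to mirror the proof of Theorem \ref{th_te1}: reduce the statement to the joint coverage of the two ``potential outcome'' prediction intervals, and then replace the union (Bonferroni) bound used there by a product bound that the conditional independence makes available. For $t \in \{-1,1\}$ write the potential outcome as $Y_t = f(X,t) + e_{X,t}$, so that $\tau(X) = Y_1 - Y_{-1}$, and introduce the coverage events
\begin{equation*}
  E_t = \{ Y_t \in [l_{D_n}(X,t), u_{D_n}(X,t)] \}, \qquad t \in \{-1,1\}.
\end{equation*}
On $E_1 \cap E_{-1}$ the bounds $l_{D_n}(X,1) \le Y_1 \le u_{D_n}(X,1)$ and $l_{D_n}(X,-1) \le Y_{-1} \le u_{D_n}(X,-1)$ combine to give $l_{D_n}(X,1) - u_{D_n}(X,-1) \le \tau(X) \le u_{D_n}(X,1) - l_{D_n}(X,-1)$, i.e.\ $E_1 \cap E_{-1} \subseteq \{\tau(X) \in \Gamma_{D_n}(X)\}$. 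Hence it suffices to prove $\mathbb P(E_1 \cap E_{-1}) \ge 1-\alpha$, and the whole burden of the improved constant rests on how tightly this intersection can be controlled.

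First I would convert the $T$-conditional guarantee into an unconditional statement about each $E_t$. Conditionally on $\{T=t\}$ the observed outcome equals $Y_t$, so Definition \ref{def_pi} gives $\mathbb P(E_t \mid T=t) \ge \sqrt{1-\alpha}$; because the assignment $T$ is independent of the covariate, of the training sample, and of the potential outcome (the randomization built into the model, of which $X \perp T$ is the stated part), conditioning on $\{T=t\}$ does not change the law of the triple $(X, e_{X,t}, D_n)$ that determines $E_t$. Therefore $\mathbb P(E_t) = \mathbb P(E_t \mid T=t) \ge \sqrt{1-\alpha}$ for both $t$.

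The engine of the proof is the factorization. Conditioning on the pair $(X, D_n)$ freezes all four interval endpoints, so $E_1$ becomes a function of $e_{X,1}$ alone and $E_{-1}$ a function of $e_{X,-1}$ alone. Since $e_{X,1}$ and $e_{X,-1}$ are conditionally independent given $X$ and the new observation is independent of $D_n$, the events $E_1$ and $E_{-1}$ are conditionally independent given $(X,D_n)$, whence
\begin{equation*}
  \mathbb P(E_1 \cap E_{-1} \mid X, D_n) = p_1\, p_{-1}, \qquad p_t := \mathbb P(E_t \mid X, D_n).
\end{equation*}
This multiplicative structure is what one hopes will upgrade the level to $(\sqrt{1-\alpha})^2 = 1-\alpha$, as opposed to the additive $2\sqrt{1-\alpha}-1$ (strictly below $1-\alpha$) that a Bonferroni bound would deliver; so the conditional independence is genuinely doing the work that distinguishes this result from Theorem \ref{th_te1}.

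The step I expect to be the main obstacle is the final passage from the two coverage guarantees to $\mathbb E[p_1 p_{-1}] \ge 1-\alpha$. The subtlety is that Definition \ref{def_pi} only yields the marginal statements $\mathbb E[p_1] \ge \sqrt{1-\alpha}$ and $\mathbb E[p_{-1}] \ge \sqrt{1-\alpha}$, and these do not by themselves force $\mathbb E[p_1 p_{-1}] \ge 1-\alpha$: because $p_1$ and $p_{-1}$ are both functions of the shared randomness $(X,D_n)$, they may be negatively aligned across covariate values, and one can then drive $\mathbb E[p_1 p_{-1}]$ below $1-\alpha$. The clean way to close this is to strengthen the input to coverage that holds conditionally on $(X,D_n)$, i.e.\ $p_t \ge \sqrt{1-\alpha}$ almost surely; then $p_1 p_{-1} \ge 1-\alpha$ pointwise and integrating over $(X,D_n)$ finishes the argument. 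I would therefore try to verify that the conformal construction of Section \ref{sec_conf_pi} delivers this $X$-conditional control at level $\sqrt{1-\alpha}$ (or, failing that, identify a non-negative-association condition on $(p_1,p_{-1})$); establishing such conditional control, rather than the elementary containment and factorization above, is where the real difficulty lies.
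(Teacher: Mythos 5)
Your first three steps --- the containment $E_1 \cap E_{-1} \subseteq \{\tau(X) \in \Gamma_{D_n}(X)\}$, the removal of the conditioning on $\{T=t\}$ via independence of $T$ from $(X, e_{X,t}, D_n)$, and the factorization $\mathbb P(E_1 \cap E_{-1} \mid X, D_n) = p_1\, p_{-1}$ --- are exactly the steps of the paper's own proof. The divergence is at the final step, and your instinct about it is correct. The paper closes the argument by asserting that ``the product function is convex on $[0,1]^2$'' and invoking Jensen's inequality to conclude $\mathbb E[p_1 p_{-1}] \ge \mathbb E[p_1]\,\mathbb E[p_{-1}] = \mathbb P(E_1)\,\mathbb P(E_{-1}) \ge 1-\alpha$. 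That assertion is false: the Hessian of $(x,y) \mapsto xy$ has eigenvalues $\pm 1$, so the product is neither convex nor concave on $[0,1]^2$, and the inequality the paper needs is precisely $\mathrm{Cov}(p_1, p_{-1}) \ge 0$, a statement about how the two conditional coverages co-vary across $(X,D_n)$ that is not implied by Definition \ref{def_pi}. This is exactly the negative-alignment obstruction you describe.

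Moreover, the obstruction is real and not merely a hole in the proof: the theorem as stated fails for adversarial interval procedures. Take $\mathbb P(T=1)=1/2$, $f \equiv 0$, and $X$ uniform on two points $\{x_A, x_B\}$. Given $X=x_A$, let $e_{X,1} \equiv 0$ and $e_{X,-1} = \pm 1$ with probability $1/2$ each; given $X = x_B$, swap the roles of the two arms. The errors are centered and trivially conditionally independent given $X$. Let the interval procedure ignore $D_n$ and output the degenerate interval $[0,0]$ for the arm with degenerate error and $[1,1]$ for the other arm. Each arm then has marginal coverage $\tfrac12 \cdot 1 + \tfrac12 \cdot \tfrac12 = \tfrac34$, so this is a $T$-conditional prediction-interval procedure at level $\sqrt{1-\alpha} = 3/4$, i.e.\ $1-\alpha = 9/16$. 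But $\Gamma_{D_n}(X)$ is the singleton $[-1,-1]$ at $x_A$ and $[1,1]$ at $x_B$, while $\tau(X) = e_{X,1}-e_{X,-1}$ lands there with probability $1/2 < 9/16$. (The point masses can be smoothed into continuous distributions with the same conclusion up to an arbitrarily small $\epsilon$.)

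So you have not proved the theorem, but no one can from the stated hypotheses: the step you single out as ``where the real difficulty lies'' is exactly the step at which the published proof is invalid. Your proposed repairs are also the right ones: either require coverage conditional on $(X,D_n)$ at level $\sqrt{1-\alpha}$, in which case $p_1 p_{-1} \ge 1-\alpha$ almost surely and integration finishes the argument, or require non-negative correlation between the arms' conditional coverages. Note that the first repair is not delivered by the conformal construction of Section \ref{sec_conf_pi}; conformal intervals guarantee only marginal coverage over $X$, as the paper itself concedes in its discussion, so either repair would have to enter the theorem as an additional explicit assumption on $\mathcal P_n$.
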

Both, Theorem \ref{th_te1} and \ref{th_te1_b}, have strong finite-sample coverage claims under minimal assumptions. Thus, it is not surprising that these intervals are conservative in certain situations. This is especially the case if the true regression function $f(x,t)$ is simple and easy to estimate (see the simulations of Section \ref{sec_sim}). In Theorem \ref{th_te1} and \ref{th_te1_b} we made no assumption on the accuracy of the estimated regression function $f_{D_n}(x,t)$. This means that these intervals do not only take the variance of the error $e_{X,1} - e_{X,-1}$ but also the estimation error of $f_{D_n}(x,t)$ into account.

The intervals of the next result are much shorter, but the assumptions are considerable stronger and the coverage claim is weaker. 
\begin{theorem} \label{th_te2}
  Let $\mathcal A_n$ a be consistent prediction procedure as given in Definition \ref{def_pred}. Let $\mathcal P_n$ be a $T$-conditional prediction-interval procedure at level $1-\alpha$ as given in Definition \ref{def_pi}. Set $(l_{D_n},u_{D_n})$ and suppose that 
  \begin{equation} \label{cover}
    \mathbb P\left(l_{D_n}(X,T) \leq f_{D_n}(X,T) \leq  u_{D_n}(X,T) \right) ~ = ~ 1.
  \end{equation}
  Assume that conditional on $X$, $(e_{X,1}, e_{X,-1})'$ is multivariate normal-distributed with $\mathrm{Var}(e_{X,1})=\mathrm{Var}(e_{X,-1})$ and non-negative covariance. For $t \in \{-1,1\}$, set
  \begin{align*}
    \tilde l_{D_n}(X,t) &= f_{D_n}(X,t) - \frac{f_{D_n}(X,t)-l_{D_n}(X,t)}{\sqrt{2}}; \\
    \tilde u_{D_n}(X,t) &= f_{D_n}(X,t) + \frac{u_{D_n}(X,t)-f_{D_n}(X,t)}{\sqrt{2}}
  \end{align*}
  and
  \begin{equation*}
    \Gamma_{D_n} (X) = \left[\tilde l_{D_n}(X,1) - \tilde u_{D_n}(X,-1) ,\tilde u_{D_n}(X,1) - \tilde l_{D_n}(X,-1) \right].
  \end{equation*}
    Then $\Gamma_{D_n} (X)$ is an asymptotically valid prediction interval for $\tau(X)$ at level $1-\alpha$.
\end{theorem}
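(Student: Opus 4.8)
The plan is to reduce the coverage statement to a pointwise inequality between normal tail probabilities and then to exploit concavity of the standard normal c.d.f.\ $\Phi$ on $[0,\infty)$. Write $L_t = f_{D_n}(X,t) - l_{D_n}(X,t)$ and $U_t = u_{D_n}(X,t) - f_{D_n}(X,t)$ for the lower and upper half-widths of the base interval; by assumption (\ref{cover}) both are nonnegative. Substituting the definitions of $\tilde l_{D_n}$, $\tilde u_{D_n}$ and of $\tau(X)$, a direct computation shows that the event $\{\tau(X)\in\Gamma_{D_n}(X)\}$ equals
\begin{equation*}
  \left\{ -\frac{L_1+U_{-1}}{\sqrt{2}} \;\le\; (e_{X,1}-e_{X,-1}) - \Delta_n \;\le\; \frac{U_1+L_{-1}}{\sqrt{2}} \right\},
\end{equation*}
where $\Delta_n = \big(f_{D_n}(X,1)-f_{D_n}(X,-1)\big) - \big(f(X,1)-f(X,-1)\big)$. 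First I would record the analogous rewriting of the base coverage: conditionally on $T=t$, the interval $[l_{D_n},u_{D_n}]$ covers $Y$ exactly when $-L_t \le e_{X,t} + R_{n,t} \le U_t$ with $R_{n,t} = f(X,t)-f_{D_n}(X,t)$. Consistency of $\mathcal A_n$ (Definition \ref{def_pred}), together with a truncation to a compact set carrying all but $\eta$ of the mass of $X$, gives $\Delta_n = o_{\mathbb P}(1)$ and $R_{n,t} = o_{\mathbb P}(1)$, and since the errors have continuous normal densities the remainder terms perturb every coverage probability by $o(1)$; hence it suffices to prove the bound with $\Delta_n$ and $R_{n,t}$ set to $0$.

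The core step is pointwise. I would condition on $(D_n,X)$, which fixes $L_t,U_t$ and the conditional law of the errors: $(e_{X,1},e_{X,-1})'$ is bivariate normal with common variance $\sigma^2=\sigma^2(X)$ and correlation $\rho=\rho(X)\ge 0$, so $W:=e_{X,1}-e_{X,-1}$ is centered normal with standard deviation $s=\sqrt{2}\,\sigma\sqrt{1-\rho}\le\sqrt{2}\,\sigma$. Writing the conditional coverage through $\Phi$ and using $s\le\sqrt2\,\sigma$ (this is exactly where the \emph{non-negative covariance} enters, and it only inflates the arguments of $\Phi$) gives
\begin{equation*}
  \mathbb P\big(\tau(X)\in\Gamma_{D_n}(X)\,\big|\,D_n,X\big)
  \;\ge\; \Phi\!\left(\frac{U_1+L_{-1}}{2\sigma}\right) + \Phi\!\left(\frac{L_1+U_{-1}}{2\sigma}\right) - 1 .
\end{equation*}
Here the factor $\tfrac1{\sqrt2}$ built into $\tilde l_{D_n},\tilde u_{D_n}$ and the factor $\sqrt2$ in $s$ combine to halve each argument. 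Because every half-width is nonnegative, all four numbers $U_1/\sigma,L_1/\sigma,U_{-1}/\sigma,L_{-1}/\sigma$ lie in $[0,\infty)$, where $\Phi$ is concave ($\Phi''(x)=-x\phi(x)\le0$); applying two-point Jensen to the midpoints $\tfrac12(U_1/\sigma+L_{-1}/\sigma)$ and $\tfrac12(L_1/\sigma+U_{-1}/\sigma)$ bounds the right-hand side below by
\begin{equation*}
  \tfrac12\big(\mathbb P(A_1\mid D_n,X) + \mathbb P(A_{-1}\mid D_n,X)\big),
\end{equation*}
where $A_t=\{-L_t\le e_{X,t}\le U_t\}$ is the reduced base-coverage event for arm $t$.

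Finally I would integrate this pointwise inequality over $(D_n,X)$ and pass to the limit. Taking expectations gives $\mathbb P(\tau(X)\in\Gamma_{D_n}(X))\ge\tfrac12\big(\mathbb P(A_1)+\mathbb P(A_{-1})\big)$ up to the $o(1)$ terms controlled in the first step, and the base procedure being $T$-conditional at level $1-\alpha$ (Definition \ref{def_pi}), together with $T\perp(X,e_{X,1},e_{X,-1})$, yields $\liminf_n\mathbb P(A_t)\ge1-\alpha$ for each $t$; hence $\liminf_n\mathbb P(\tau(X)\in\Gamma_{D_n}(X))\ge\tfrac12((1-\alpha)+(1-\alpha))=1-\alpha$, which is (\ref{ineq2}). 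The main obstacle is that the base guarantee is only \emph{marginal} in $X$ whereas the combination is naturally conditional on $X$; the device that resolves it is the concavity bound above, which produces a pointwise inequality that survives integration, so that marginal level-$(1-\alpha)$ coverage of the two arms suffices. The remaining delicate point is the uniform control of the consistency remainders $\Delta_n,R_{n,t}$ when $\sigma(X)$ is not bounded away from $0$, which I would handle by the compact-set truncation together with continuity of $\Phi$.
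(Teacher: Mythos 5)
Your proposal is correct and takes essentially the same route as the paper's own proof: your core pointwise step (bounding the variance of $e_{X,1}-e_{X,-1}$ by $2\sigma^2$ via the non-negative covariance, then applying two-point Jensen to the concave part of $\Phi$) is exactly the paper's Lemma \ref{le_a1}, and your handling of the consistency remainders $\Delta_n, R_{n,t}$ via compact truncation and Gaussian anti-concentration plays the role of the paper's events $B_1 \cap B_{2,n}$ together with Lemma \ref{le_a2}. The only looseness is the final remark on controlling the remainders when $\sigma(X)$ is not bounded away from zero, which is resolved (as in the paper) by integrating the perturbation cost unconditionally and using that the unconditional error distribution has no point mass, rather than by any uniform bound.
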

An inspection of the proof and the remark after Lemma \ref{le_a1} shows that if the errors $e_{X,-1}$ and $e_{X,1}$ are negatively correlated then Theorem \ref{th_te2} continuous to hold if we omit the factor $\sqrt{2}$ in the definition of $\tilde l_{D_n}(X,t)$ and $\tilde u_{D_n}(X,t)$. (This means, $\tilde l_{D_n}(X,t)$ and $\tilde u_{D_n}(X,t)$ reduces to $l_{D_n}(X,t)$ and $u_{D_n}(X,t)$, respectively.)

Because $\mathcal A_n$ is a consistent prediction procedure, the proof of the theorem basically reduces to showing an elementary probabilistic inequality for Gaussian random variables which is given in Lemma \ref{le_a1} of the appendix.

\section{$T$-conditional prediction-interval procedures} \label{sec_conf_pi}

Conformal prediction intervals control coverage marginally \citep{vovk2005}. But it is straightforward to construct conformal prediction intervals with coverage conditional on any event with positive probability. Loosely speaking, one only needs to apply the conformal method "conditional" on the event of interest. For completeness, this section describes in detail how to apply the conformal method to construct $T$-conditional prediction-interval procedures $\mathcal P_n$ in the sense of Definition \ref{def_pi}. Because computation of conformal prediction intervals is expensive and infeasible in certain situations, we also show how to adapt the ``split-conformal'' procedure (\cite{lei2018}). Computation of the ``split-conformal'' prediction intervals is much cheaper but the resulting intervals are typically longer. 

\subsection{Nonconformity measures and scores}

In the following, we define the terms conformal procedure, nonconformity measure and nonconformity score.
\begin{definition} \label{def_nonconformity}
  A conformal procedure $\mathcal M_n$ is a mapping from $\mathbb D^n$ into the set of measurable functions $\{m: \mathbb D \to [0,\infty)\}$. We set
  \begin{equation*}
    m_{D_n} = \mathcal M_n(D_n),
  \end{equation*}
  where we call $m_{D_n}$ the nonconformity measure and $m_{D_n}(x,y,t)$ the nonconformity score of the observation $(x,y,t) \in \mathbb D$.

  $\mathcal M_n$ is called permutation invariant if, for any $n \in \mathbb{N}$, for any permutation $\sigma_n: \{1,\dots,n\} \to \{1,\dots,n\}$ and for any $(x,y,t) \in \mathbb D$,
  \begin{equation*}
    \mathbb P \left( m_{D_n}(x,y,t) = m_{D_{n,\sigma_n}}(x,t) \right) ~ = ~ 1.
  \end{equation*}
  where $D_{n,\sigma_n}=((X_{\sigma_n(1)},Y_{\sigma_n(1)},T_{\sigma_n(1)}),\dots,(X_{\sigma_n(n)},Y_{\sigma_n(n)},T_{\sigma_n(n)}))$.
\end{definition}
In the following, we give examples of different nonconformity measures.
\begin{example}[Absolute residuals]
  The simplest nonconformity measure is given by
  \begin{equation*}
    m_{D_n}(x,y,t) = | y - f_{D_n}(x,t)|.
  \end{equation*}
  However, this measure is not optimal if the conditional variance of $Y$ given $X$ and $T$ is not constant, i.e., if the error $e_{X,T}$ is heteroscedastic.
\end{example}
\begin{example}[Absolute standardized residuals]
  To take heteroscedasticity into account, one can define the nonconformity measure as
  \begin{equation*}
    m_{D_n}(x,y,t) = | y - f_{D_n}(x,t)| / \sigma_{D_n}(x,t),
  \end{equation*}  
  where $\sigma_{D_n}^2(x,t)$ is an estimator for the conditional variance of $Y$ conditional on $X$ and $T$. Typically, a kernel-based algorithm is used to obtain $\sigma_{D_n}^2(x,t)$.
\end{example}
\begin{example}[ML-based measures]
  Modern machine learning algorithms, such as neural networks, are able to estimate the distribution of the error $e_{X,t}$ and one can define nonconformity measures tailored to these algorithms as shown in \cite{kivaranovic2019b,romano2019}.
\end{example}
Note that the conformal procedure in these examples are permutation invariant, if the corresponding estimation procedures are permutation invariant. 

\subsection{$T$-conditional conformal prediction intervals} \label{sec_full}

Let $D_{n+1}(y)$ denote the augmented dataset which contains $D_n$ and the pseudo observation $(X,y,T)$, where $y \in \mathbb R$. Recall that
\begin{equation*}
  m_{D_{n+1}(y)} = \mathcal M_{n+1}(D_{n+1}(y)),
\end{equation*}
is the nonconformity measure of the conformal procedure $\mathcal M_{n+1}$ evaluated at $D_{n+1}(y)$. Denote by $D_{n+1,T}(y)$ the subset of observations in $D_{n+1}(y)$ with treatment equal to $T$. Note that $D_{n+1,T}(y)$ is non-empty because $(X,y,T)$ is always contained. For ease of notation, we set
\begin{equation} \label{eq_vector}
  \Scale[0.85]{
  m_{D_{n+1}(y)}(D_{n+1,T}(y)) = \left \{ m_{D_{n+1}(y)}((\tilde X, \tilde Y, \tilde T)): (\tilde X, \tilde Y, \tilde T) \in D_{n+1,T}(y), ~ \tilde T = T \right \},
  }
\end{equation}
this means, $m_{D_{n+1}(y)}(D_{n+1,T}(y))$ is the set of nonconformity scores of the nonconformity measure $m_{D_{n+1}(y)}$ evaluated on the random variables in $D_{n+1,T}(y)$. Denote by $R_{D_{n+1}(y)}(y)$ the rank of $m_{D_{n+1}(y)}(X,y,T)$ among the random variables in $m_{D_{n+1}(y)}(D_{n+1,T}(y))$. In case of ties, we use a data-independent tie-breaking rule. Let $N_T = \sum_{i=1}^n \mathds{1}_{\{T_i=T\}}$ and set 
\begin{equation} \label{eq_pi_full}
  \Gamma^{F}_{D_n}(X,T) = \left\{y \in \mathbb R: ~ R_{D_{n+1}(y)}(y) \leq  \left \lceil (1-\alpha) \left (N_T +1 \right) \right \rceil \right\}
\end{equation}
(Note that if $\alpha (N_T +1)<1$, then $\Gamma^{F}_{D_n}(X,T) = (-\infty,\infty)$.) The superscript $F$ emphasizes that the full dataset is used for computation of the nonconformity measure. This is in contrast to the ``split-conformal'' version of the next section, where only a subset is used for computation of the nonconformity measure.
\begin{theorem} \label{th_full}
  Let $\mathcal M_n$ be a permutation invariant conformal procedure as given in Definition \ref{def_nonconformity} . Let $\Gamma^F_{D_n}(X,T)$ be defined as in (\ref{eq_pi_full}). Let $t \in \{-1,1\}$ and set $p_t = \mathbb P(T=t)$. Then  \begin{equation*}
    1-\alpha ~ \leq ~ \mathbb{P}\left(Y \in \Gamma^{F}_{D_n}(X,T) ~ | ~ T=t \right) ~ \leq ~ 1-\alpha + \frac{1-(1-p_t)^{n+1}}{(n+1)p_t}.
  \end{equation*}
\end{theorem}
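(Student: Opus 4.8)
The plan is to reduce the coverage statement to the distribution of the rank $R_{D_{n+1}(Y)}(Y)$ and then exploit exchangeability within each treatment group. First I would note that, by construction, evaluating the membership rule at the true value $y=Y$ makes the augmented dataset $D_{n+1}(Y)$ coincide with the full sample together with the new point $(X,Y,T)$, so that the event $\{Y \in \Gamma^F_{D_n}(X,T)\}$ is exactly $\{R_{D_{n+1}(Y)}(Y) \leq \lceil (1-\alpha)(N_T+1)\rceil\}$.

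The heart of the argument is an exchangeability step, which I expect to be the main obstacle. I would condition on $T=t$ and on $N_T = k$, the number of training points carrying treatment $t$. Given these events, the $k+1$ observations with treatment $t$ in $D_{n+1}(Y)$ --- namely the $k$ matching training points and the new point $(X,Y,t)$ --- are i.i.d.\ draws from the law of $(X,Y)$ given $T=t$, hence exchangeable, and they are independent of the block of observations carrying treatment $-t$. Since $\mathcal M_{n+1}$ is permutation invariant, $m_{D_{n+1}(Y)}$ is a symmetric function of the whole sample; permuting the treatment-$t$ observations among themselves therefore leaves the measure unchanged and merely permutes their nonconformity scores. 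Combining this symmetry with the exchangeability of the underlying observations shows that the vector of scores attached to the $k+1$ treatment-$t$ points is exchangeable, so the rank of the new point's score is uniform on $\{1,\dots,k+1\}$; the data-independent tie-breaking rule secures uniformity even in the presence of ties. The delicate point here is to verify that the symmetrization acts only within the matching treatment block and that conditioning on $\{T=t,\,N_T=k\}$ preserves the i.i.d.\ structure of that block.

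With the rank uniform on $\{1,\dots,k+1\}$, I would obtain the exact conditional coverage
\[
  \mathbb P\big(Y \in \Gamma^F_{D_n}(X,T) \mid T=t,\, N_T=k\big) = \frac{\lceil (1-\alpha)(k+1)\rceil}{k+1},
\]
which is well posed because $1 \leq \lceil (1-\alpha)(k+1)\rceil \leq k+1$. The elementary bounds $(1-\alpha)(k+1) \leq \lceil (1-\alpha)(k+1)\rceil \leq (1-\alpha)(k+1)+1$ give $1-\alpha$ as a lower bound for every $k$, which survives averaging immediately, and $1-\alpha + \tfrac{1}{k+1}$ as an upper bound. Finally I would take expectations over $N_T$: because $T$ is independent of $D_n$, conditional on $T=t$ we have $N_T \sim \mathrm{Binomial}(n,p_t)$, and using the identity $\tfrac{1}{k+1}\binom{n}{k} = \tfrac{1}{n+1}\binom{n+1}{k+1}$ together with the binomial theorem yields
\[
  \mathbb E\!\left[\frac{1}{N_T+1} \,\Big|\, T=t\right] = \frac{1-(1-p_t)^{n+1}}{(n+1)p_t},
\]
which is precisely the excess term in the claimed upper bound. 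Assembling the conditional bounds with this expectation completes the proof.
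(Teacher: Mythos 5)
Your proposal is correct and follows essentially the same route as the paper's proof: condition on $T=t$ and $N_t=k$, use permutation invariance plus the i.i.d.\ structure of the treatment-$t$ block to get exchangeability of the scores and hence a uniform rank, bound the resulting ceiling expression between $1-\alpha$ and $1-\alpha+\tfrac{1}{k+1}$, and average over the $\mathrm{Binomial}(n,p_t)$ distribution of $N_t$. The only cosmetic difference is that you derive $\mathbb E\left[(N_t+1)^{-1}\right]$ directly via the identity $\tfrac{1}{k+1}\binom{n}{k}=\tfrac{1}{n+1}\binom{n+1}{k+1}$, whereas the paper cites Chao and Strawderman (1972) for the same formula.
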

The upper bound in this theorem converges to $1-\alpha$ as $n \to \infty$, therefore, the procedure is asymptotically exact. We note that $\Gamma^F_{D_n}(X,T)$ is not an interval in general. But, obviously, the interval
\begin{equation*}
  \left[\inf ~ \Gamma^F_{D_n}(X,T) , \sup ~ \Gamma^F_{D_n}(X,T) \right]
\end{equation*}
is a $T$-conditional prediction interval for $Y$, because it is a superset of $\Gamma^F_{D_n}(X,T)$. Computation of $\Gamma^F_{D_n}(X,T)$ is very expensive if a brute-force approach is used because then one needs to compute the nonconformity measure $m_{D_{n+1}(y)}$ for all augmented datasets $D_{n+1}(y)$, where $y \in \mathbb R$. Of course, practically, we do this only on a sufficiently dense grid for $y$, but if the dataset is large and one is using a modern learning algorithm, such as a neural network, this is computationally infeasible.

\subsection{$T$-conditional ``split-conformal'' prediction intervals} \label{sec_split}

For $1 \leq m < n$, $D_m$ is the dataset of the first $m$ observation and denote by $D_{m:n}$ the dataset of the remaining $n-m$ observations. Let $D_{m:(n+1)}(y)$ be the augmented dataset which contains $D_{m:n}$ and the pseudo observation $(X,y,T)$. Denote by $D_{m:(n+1),T}(y)$ the subset of observations in $D_{m:(n+1)}(y)$ with treatment equal to $T$. Again, $D_{m:(n+1),T}(y)$ is non-empty because $(X,y,T)$ is always contained. Let $m_{D_m}(D_{m:(n+1),T}(y))$ be defined as in (\ref{eq_vector}) with $D_m$ replacing $D_{n+1}(y)$ and $D_{m:(n+1),T}(y)$ replacing $D_{n+1,T}(y)$. This means, $m_{D_m}(D_{m:(n+1),T}(y))$ is the set nonconformity scores of the nonconformity measure $m_{D_m}$ evaluated on the random variables in $D_{m:(n+1),T}(y)$. Denote by $R_{D_m}(y)$ the rank of $m_{D_m}(X,y,T)$ among the random variables in $D_{m:(n+1),T}(y)$. In case of ties, we use a data-independent tie-breaking rule. Let $\tilde N_T = \sum_{i=m+1}^n  \mathds{1}_{\{T_i=T\}}$ and set
\begin{equation} \label{eq_pi_split}
  \Gamma^{S}_{D_n}(X,T) = \left\{y \in \mathbb R: ~ R_{D_m}(y) \leq  \left \lceil (1-\alpha) \left (\tilde N_T +1 \right) \right \rceil \right\}  
\end{equation}
(Note that if $\alpha(\tilde N_T +1) < 1$ then $\Gamma^{S}_{D_n}(X,T)=(-\infty,\infty)$.) The superscript $S$ emphasizes that only a subset of the data is used for computation of the nonconformity measure.
\begin{theorem} \label{th_split}
  Let $\mathcal M_n$ be a conformal procedure as given in Definition \ref{def_nonconformity} . Let $\Gamma^S_{D_n}(X,T)$ be defined as in (\ref{eq_pi_split}). Let $t \in \{-1,1\}$ and set $p_t = \mathbb P(T=t)$,
  \begin{equation*}
    1-\alpha ~ \leq ~ \mathbb{P}\left(Y \in \Gamma^{S}_{D_n}(X,T) ~ | ~ T=t \right) ~ \leq ~ 1-\alpha + \frac{1-(1-p_t)^{n-m+1}}{(n-m+1)p_t}.
  \end{equation*}
\end{theorem}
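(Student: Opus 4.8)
The plan is to condition on the training set $D_m$ and reduce the claim to a statement about the rank of a single exchangeable score, in the same spirit as the full-conformal case of Theorem \ref{th_full}, but now exploiting that the nonconformity measure $m_{D_m}$ is held \emph{fixed} once $D_m$ is given. First I would fix $t$ and condition on $D_m$. Since $m_{D_m} = \mathcal M_m(D_m)$ depends only on $D_m$, on this conditional probability space it is a deterministic measurable function that does not change when the calibration sample is augmented by the pseudo observation $(X,y,T)$. This is precisely the structural reason why the permutation-invariance hypothesis required in Theorem \ref{th_full} can be dropped here.

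Next I would isolate the relevant scores. Conditioning additionally on the event $\{T=t\}$ for the new observation and on $\{\tilde N_T = k\}$, the treatment-$t$ calibration observations together with $(X,Y,t)$ are i.i.d.\ draws from the conditional law of $(X,Y)$ given $T=t$; this uses the independence of $X$ and $T$, the i.i.d.\ assumption on $D_n$, and the independence of $D_{m:n}$ from $D_m$. Because $m_{D_m}$ is fixed, the corresponding $k+1$ nonconformity scores are i.i.d., hence exchangeable, and with the data-independent tie-breaking rule the rank $R_{D_m}(Y)$ of the new score among these $k+1$ values is uniformly distributed on $\{1,\dots,k+1\}$.

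From this uniformity, $\mathbb P\big(R_{D_m}(Y) \le \lceil (1-\alpha)(k+1)\rceil \mid D_m, T=t, \tilde N_T = k\big) = \lceil (1-\alpha)(k+1)\rceil/(k+1)$, and the elementary bounds $(1-\alpha)(k+1) \le \lceil (1-\alpha)(k+1)\rceil < (1-\alpha)(k+1)+1$ sandwich this conditional coverage between $1-\alpha$ and $1-\alpha + 1/(k+1)$. Since this holds for every $k$ and does not depend on the realized value of $D_m$, integrating out $D_m$ and averaging over $\tilde N_T$ gives the lower bound $1-\alpha$ at once and the upper bound $1-\alpha + \mathbb E[1/(\tilde N_T + 1)]$. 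Here one uses that, conditional on $\{T=t\}$ for the new point, $\tilde N_T = \sum_{i=m+1}^n \mathds{1}_{\{T_i = t\}}$ is still $\mathrm{Binomial}(n-m, p_t)$, because the calibration sample is independent of the new observation.

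The final step, the only genuinely computational one, is to evaluate $\mathbb E[1/(\tilde N_T + 1)]$ for a $\mathrm{Binomial}(n-m, p_t)$ variable. I would apply the identity $\frac{1}{k+1}\binom{M}{k} = \frac{1}{M+1}\binom{M+1}{k+1}$ with $M = n-m$ to re-index the sum and recognize it as the (shifted) total mass of a $\mathrm{Binomial}(n-m+1, p_t)$ law, obtaining $\mathbb E[1/(\tilde N_T+1)] = \big(1-(1-p_t)^{n-m+1}\big)/\big((n-m+1)p_t\big)$, which matches the stated bound. The main point to handle carefully is the conditioning bookkeeping in the second step—verifying that after conditioning on $\{T=t\}$ and $\{\tilde N_T = k\}$ the $k+1$ points are genuinely i.i.d.\ and that the fixed measure $m_{D_m}$ preserves exchangeability of their scores; the remainder is the same reciprocal-binomial computation already used for Theorem \ref{th_full}, with $n-m$ replacing $n$.
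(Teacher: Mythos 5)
Your proof is correct and follows essentially the same route as the paper's: condition on $\{T=t\}$ and $\tilde N_t$, observe that the fixed measure $m_{D_m}$ applied to the calibration points and the new point yields exchangeable scores (hence a uniform rank), sandwich the conditional coverage between $1-\alpha$ and $1-\alpha+1/(\tilde N_t+1)$, and take expectation over the $\mathrm{Binomial}(n-m,p_t)$ count. The only cosmetic difference is that you derive $\mathbb E[1/(\tilde N_t+1)]$ by hand via the identity $\frac{1}{k+1}\binom{M}{k}=\frac{1}{M+1}\binom{M+1}{k+1}$, whereas the paper cites \cite{chao1972} for the same formula.
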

Again, the upper bound converges to $1-\alpha$ as $n-m \to \infty$, therefore, the procedure is asymptotically exact. Note that in the ``split-conformal'' case, we do not need that $\mathcal M_n$ is permutation invariant. Computation of $\Gamma^S_{D_n}(X,T)$ is much cheaper, because we need to calculate the nonconformity measure only once on $D_m$. The set $\Gamma^{S}_{D_n}(X,T)$ is an interval if the nonconformity measure $m_{D_n}$ is convex in $y$, which is typically the case for the most common nonconformity measures.

\section{Simulation} \label{sec_sim}

We fixed the covariate dimension $d=10$ and the significance level $\alpha=0.1$. Further, we set $\mathbb P(T=1)=1/2$. We considered 4 different sample sizes: $n=300,700,1200,2000$. The covariate vector $X$ is $N(0,\Sigma)$-distributed, where $\Sigma$ is a $10 \times 10$ matrix where each element on the diagonal is equal $1$ and each element off the diagonal is equal to $\rho$. We considered two values for $\rho$: $0.2$ (low correlation) and $0.8$ (high correlation). We considered two different regression functions:
\begin{equation*}
  f_1(x,t) = x_1 + x_2 + x_3 + t \quad \text{and} \quad f_2(x,t) = \mathrm{sign}(f_1(x,t))  f_1(x,t)^2.
\end{equation*}
We also considered two different distributions for the error distribution: $e_{X,t}$ conditional on $X$ is either a standard normal distributed or Laplace distributed with variance $1$. Considering all possible combinations, we have 8 different data generating processes in total: LOW COR vs. HIGH COR, LINEAR vs. NON LINEAR and NORMAL vs. LAPLACE.

We also considered two different $T$-conditional prediction interval procedures: The first is the linear least squares method with the absolute error as nonconformity measure (as described in Example 1). Because linear models are very fast to fit, we use the full conformal version as described in Section \ref{sec_full} to construct the prediction intervals. The second is a neural network with one hidden layer and ten hidden nodes and the same nonconformity measure as in the previous case. Because fitting a neural network is computationally more expensive, we use the split conformal approach as described in Section \ref{sec_split}. We used 2/3 of the data for training of the model and 1/3 for computation of the nonconformity scores. For both methods, we constructed the prediction interval as described in Theorem \ref{th_te1_b} and \ref{th_te2}. This means, in total we have 4 different prediction intervals which we label LM1, LM2, NN1 and NN2. Here LM denotes the linear least squares method, NN the neural network and 1 and 2 denote the construction as described in Theorem \ref{th_te1_b} and \ref{th_te2}, respectively.

For each of the $4$ sample sizes, $8$ data generating processes and $4$ prediction intervals, we repeated the following steps $1000$ times: First, we draw the dataset $D_{n}$ and the new covariate vector $X$ and the prediction interval for $\tau(X)$. Then we draw $\tau(X)$. And, finally, we compute the length of the prediction interval and check if it covers $\tau(X)$. The length of the prediction intervals is computed relative to the length of the oracle which knows the underlying data generating process. To compute the conformal prediction intervals, we took a sufficiently large grid with step size $.1$ and defined the interval as the smallest and largest value in the grid that satisfied the condition (\ref{eq_pi_full}). The results are summarized in Figure 1 and 2. 

In Figure 1, all data generating processes with Gaussian error distribution are shown. We see that the methods LM1 and NN1 always control coverage. This is in line with Theorem \ref{th_te1_b} which guarantees finite sample coverage in all these situations. But, because of the strong coverage claim, these intervals are also very conservative, especially when the true regression function is a simple linear model (first and second row). Furthermore, we see that the method NN1 significantly outperforms LM1 (in terms of length) when the true model is non linear (third and fourth row) and the sample size grows. This means, the neural network is able to estimate the non linear regression function more accurately as the sample size grows and, therefore, gives more accurate intervals. The linear regression is clearly not able to do this.

On the other hand, the methods LM2 and NN2 are much closer to the nominal level of 90\% and their lengths are much closer to the oracle. But we note that the assumptions of Theorem \ref{th_te2} are much stronger. We see that LM2 is nearly optimal (in terms of coverage and length) if the true regression function is linear. The method NN2 performs only marginally worse in these cases. This is not surprising because both, linear regression and neural networks, are able to consistently estimate linear functions. NN2 performs a bit worse because the complexity of the neural network is not required for the simple structure of the linear regression function. In the cases where the true regression function is non linear (third and fourth row), the method LM2 has coverage consistently below the nominal level. Furthermore, the length of the intervals is up to 4 times larger than of the oracle and does not improve with growing sample size. Clearly, LM2 does not satisfy the assumptions of Theorem \ref{th_te2} and there for it is not surprising that it does not control coverage. However, NN2 performs much better in these situations. The coverage of NN2 is close to the nominal level and the length converges to the length of the oracle as the sample size grows. This shows that the gain of complex models is very high if sufficient data is available and the true regression function not linear. NN2 controls not only the coverage more accurately, the intervals are also shorter by a factor 3 to 4.

To demonstrate the robustness of our results, Figure 2 shows the same data generating processes with Laplace error instead of Gaussian error. All four methods behave very similarly to the Gaussian case.  This suggests that Theorem \ref{th_te2} may be valid for a wider class of distributions.

\section{Discussion} \label{sec_discussion}

In this manuscript we proposed several methods to construct prediction intervals for the estimation of individual treatment effects based on complex prediction models such as regression models with variable selection or neural network algorithms. The prediction intervals give bounds for the differences between the outcomes under treatment and control for individual patients. This is in contrast to confidence intervals for the predicted individual treatment effect, which estimate the expected  individual treatment effect for patients with specific baseline characteristics. 

The intervals of Theorem \ref{th_te1} and \ref{th_te1_b} control coverage of the individual treatment effect under minimal assumptions. Theorem \ref{th_te2} provides considerably shorter intervals that are valid under stronger assumptions. A limitation in the interpretation of conformal prediction interval is that the coverage probabilities are guaranteed on a population level only, but not conditional on a specific vector of covariates. Especially, they are only meaningful, if predictions are made for a population with the same covariate distribution as the population from which the intervals were computed. This is in contrast to the classical prediction intervals, e.g., for linear models, that also provide conditional coverage, given the value of the covariates. 


The width of the prediction intervals can be a useful measure to compare different prediction models. While more flexible models may be able to explain more variability by adjusting for more  covariates or more complex function of the covariates, they also introduce variability because of increasing uncertainties in the model estimates for more extensive models.  As demonstrated in the simulation study, if the sample size is low  simpler models, even though they are  miss-specified, can lead to narrower prediction intervals than more complex models which overfit the data and lead to highly variable model estimates.

\renewcommand{\floatpagefraction}{0.1}

\begin{figure*}
\centering
\begin{minipage}{.5\textwidth}
  \centering
  \centerline{\includegraphics[trim=0 40 0 0, clip, width=1\textwidth]{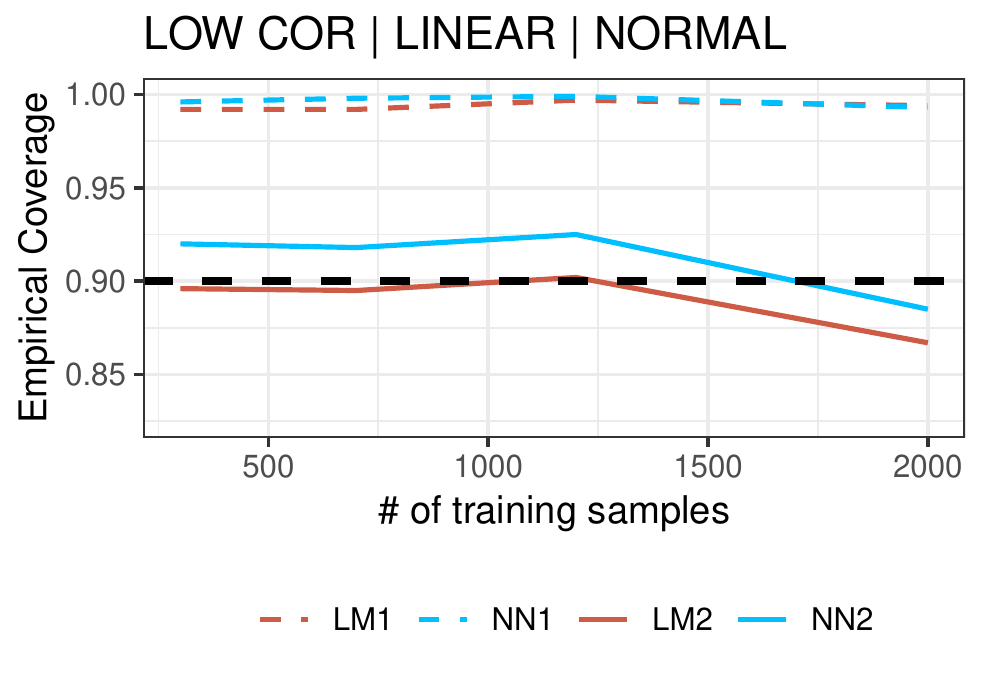}} 
  \centerline{\includegraphics[trim=0 40 0 0, clip, width=1\textwidth]{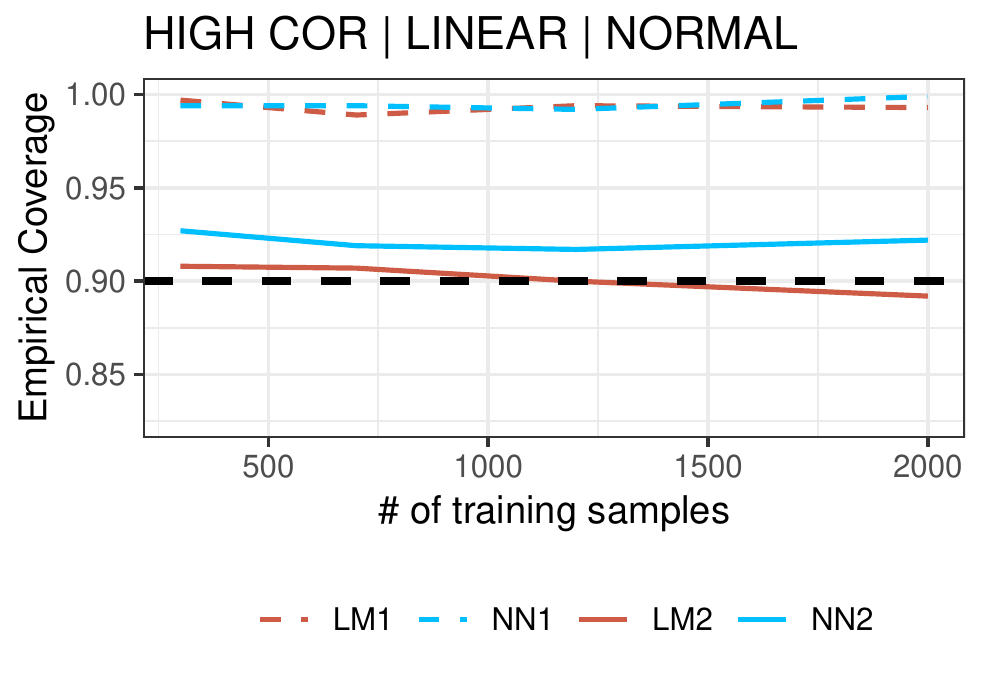}}
  \centerline{\includegraphics[trim=0 40 0 0, clip, width=1\textwidth]{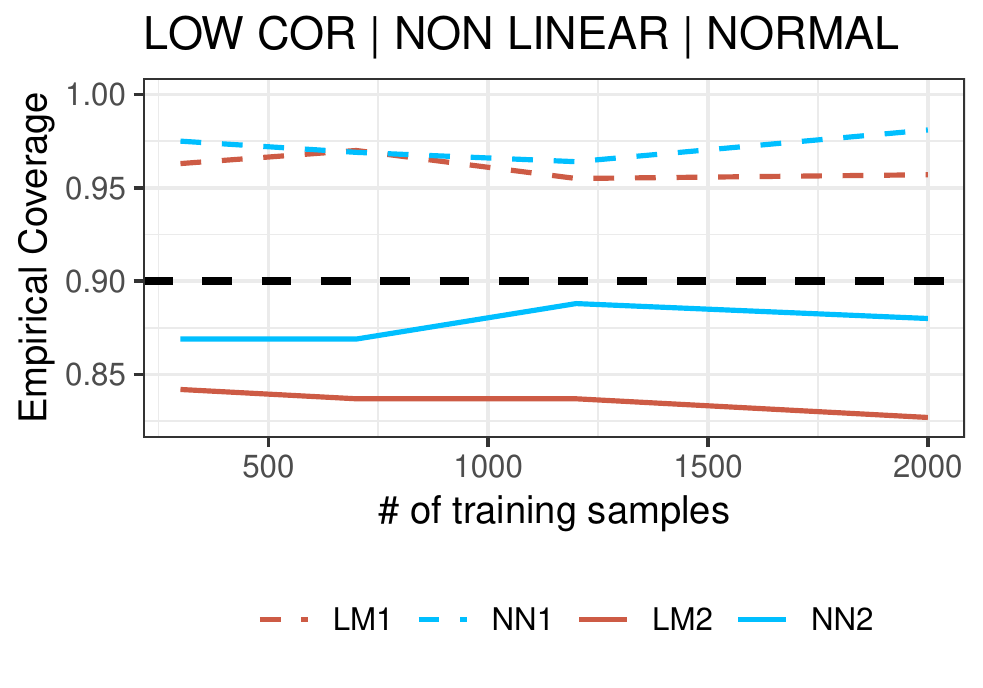}}
  \centerline{\includegraphics[trim=0  0 0 0, clip, width=1\textwidth]{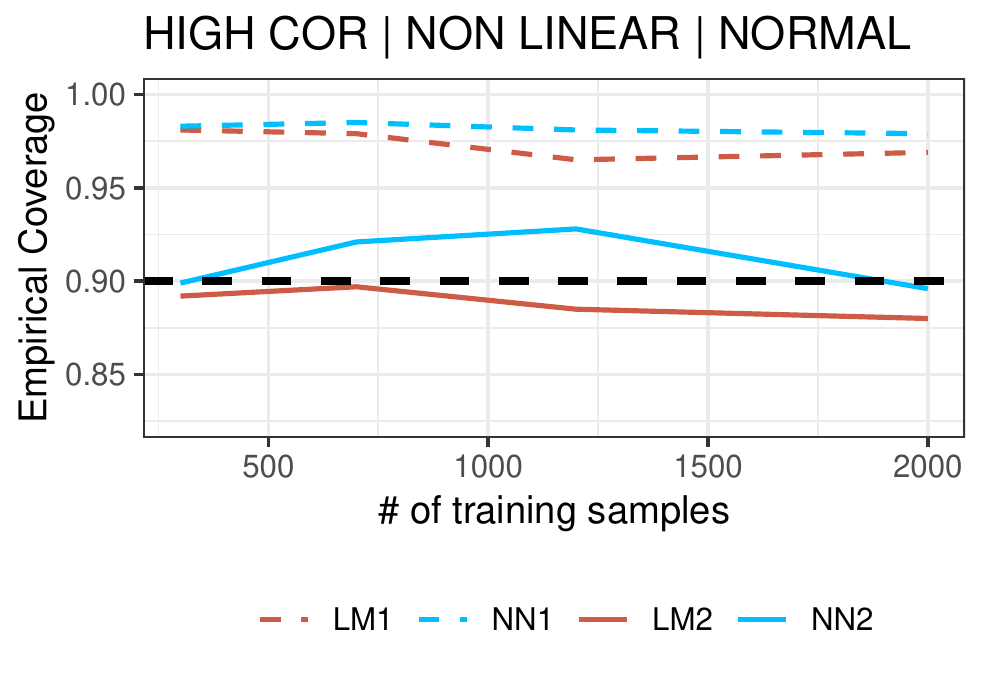}}
\end{minipage}%
\begin{minipage}{.5\textwidth}
  \centering
  \centerline{\includegraphics[trim=0 40 0 0, clip, width=1\textwidth]{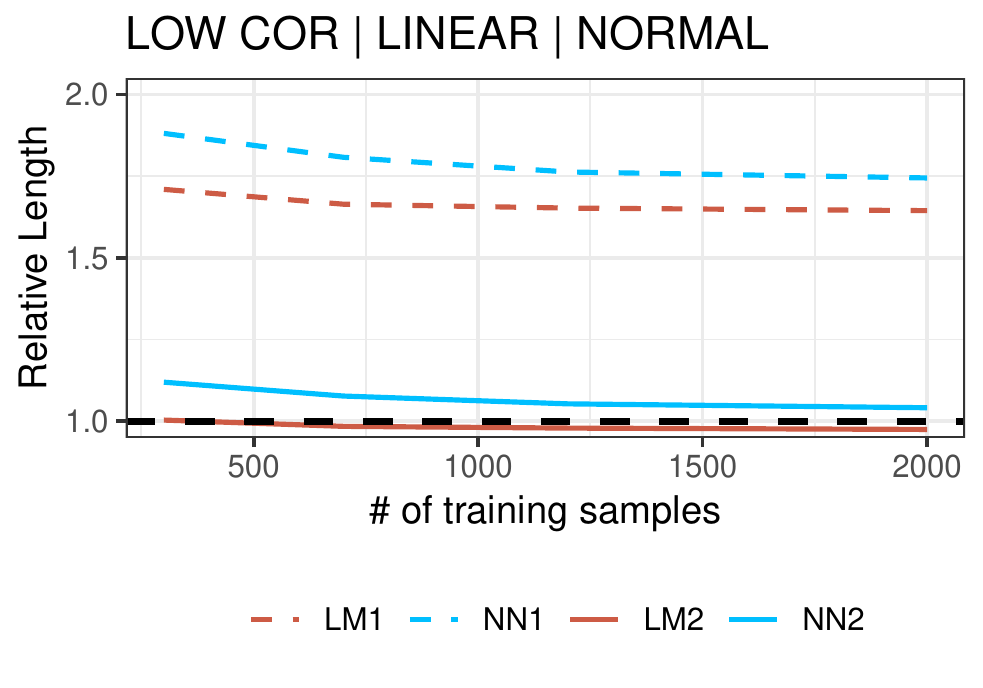}}
  \centerline{\includegraphics[trim=0 40 0 0, clip, width=1\textwidth]{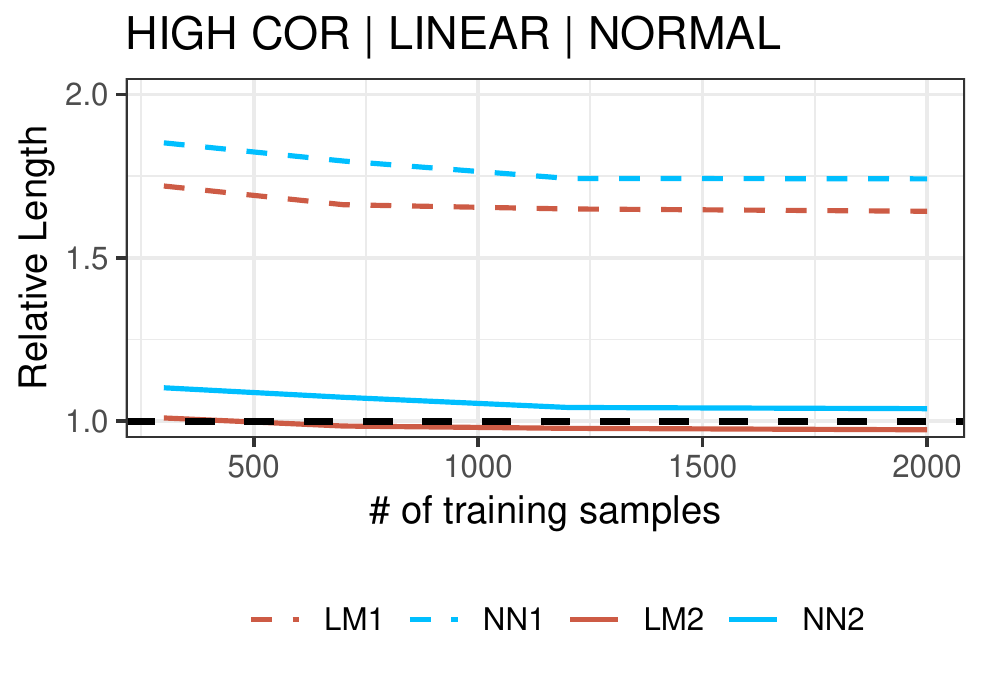}}
  \centerline{\includegraphics[trim=0 40 0 0, clip, width=1\textwidth]{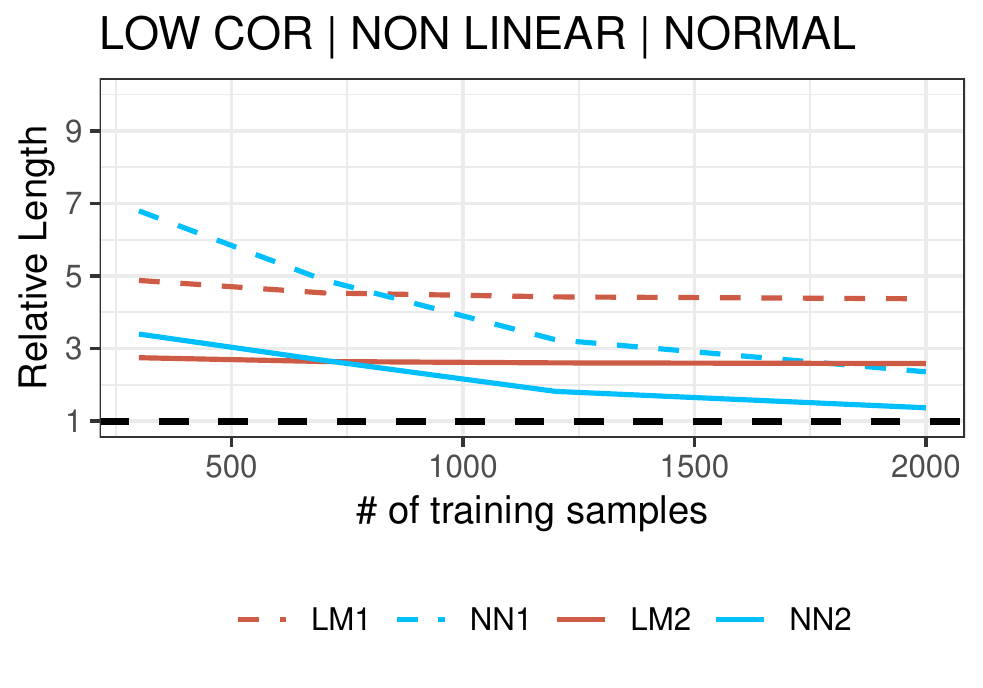}}
  \centerline{\includegraphics[trim=0  0 0 0, clip, width=1\textwidth]{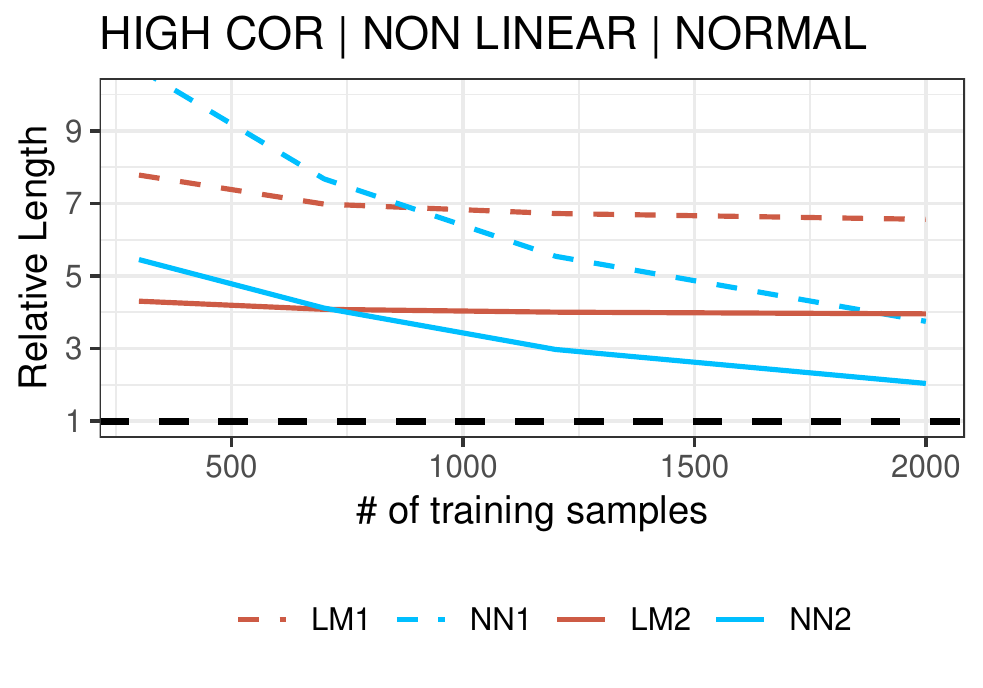}}
\end{minipage}
\caption{Empirical coverage is shown on the l.h.s. (standard error is approximately 1\%). The dotted line at .9 denotes the nominal confidence level. On the r.h.s., the relative length compared to the oracle is shown. The dotted line at 1 denotes the length of the oracle. The data generating process is given above each graph. Each setting is simulated with Gaussian errors.}
\end{figure*}
\renewcommand{\floatpagefraction}{0.1}

\begin{figure*}
\centering
\begin{minipage}{.5\textwidth}
  \centering
  \centerline{\includegraphics[trim=0 40 0 0, clip, width=1\textwidth]{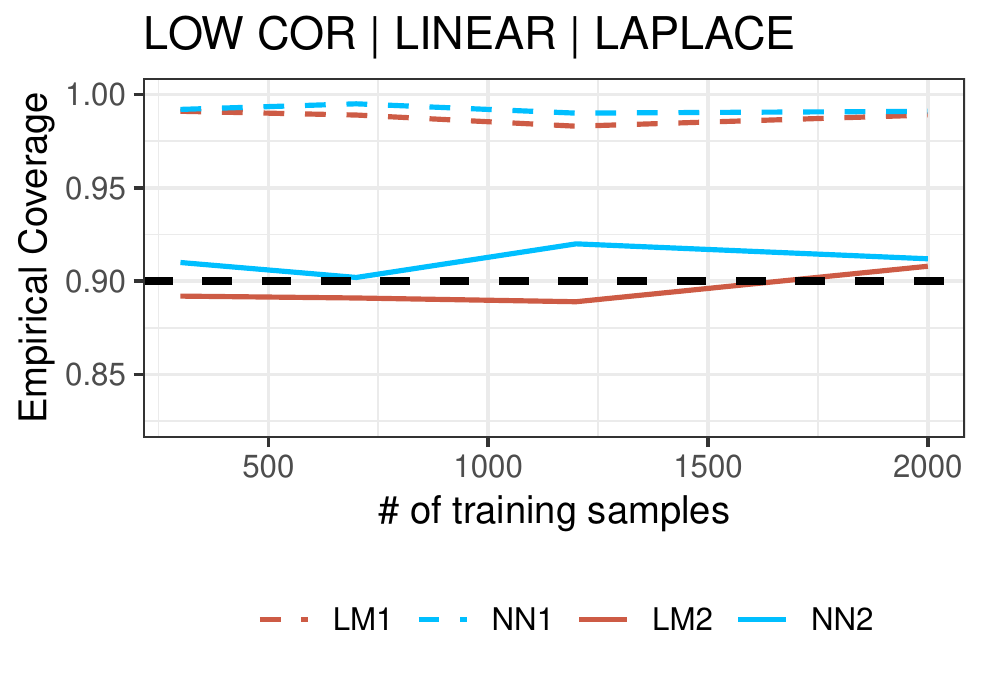}}
  \centerline{\includegraphics[trim=0 40 0 0, clip, width=1\textwidth]{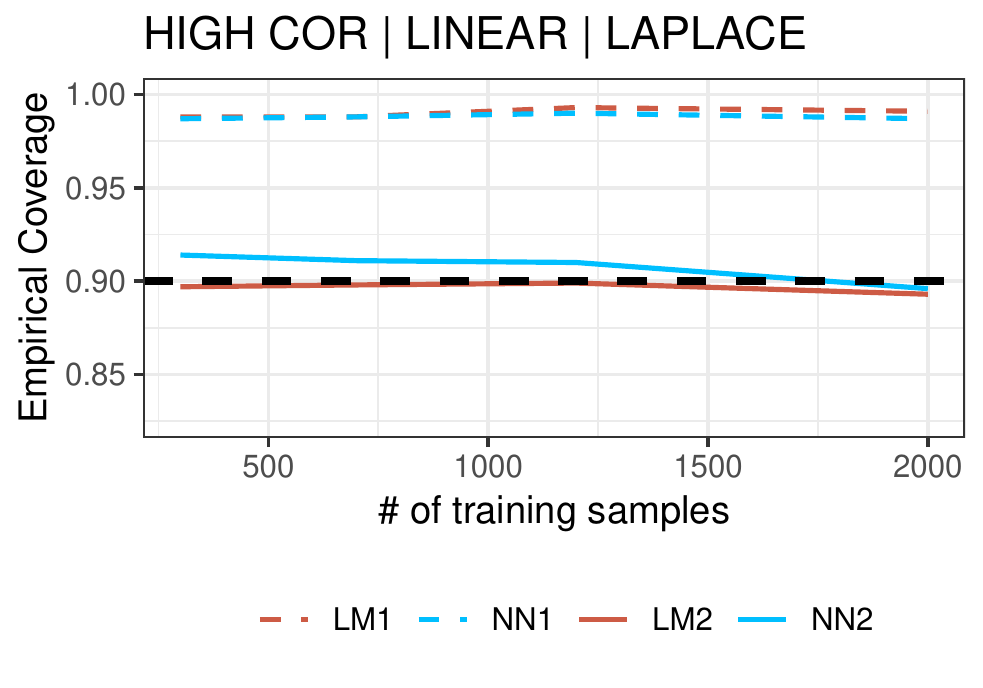}}
  \centerline{\includegraphics[trim=0 40 0 0, clip, width=1\textwidth]{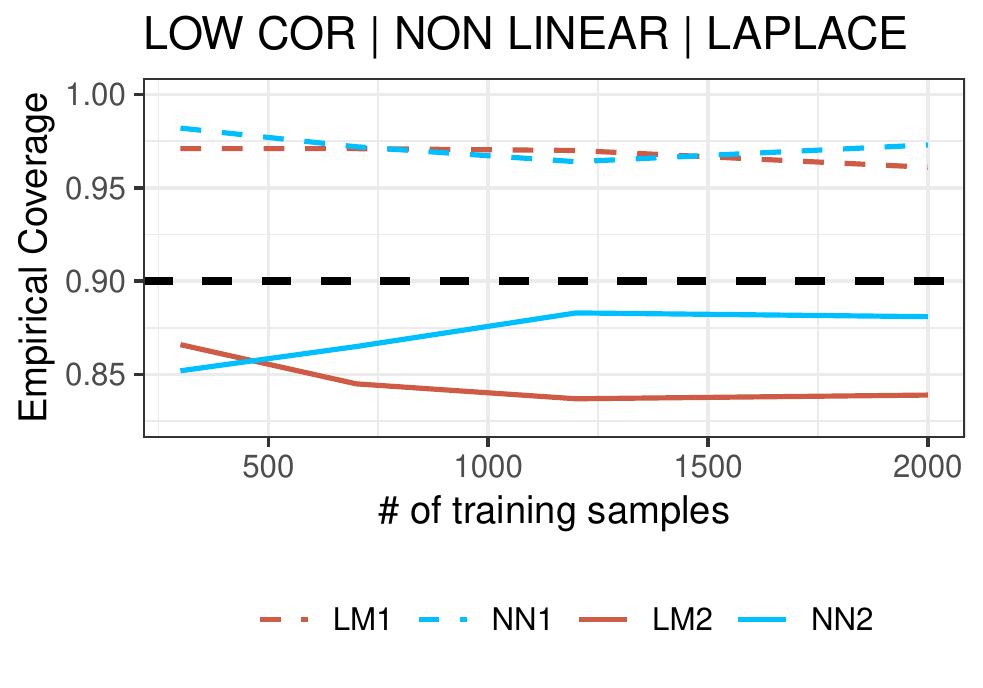}}
  \centerline{\includegraphics[trim=0  0 0 0, clip, width=1\textwidth]{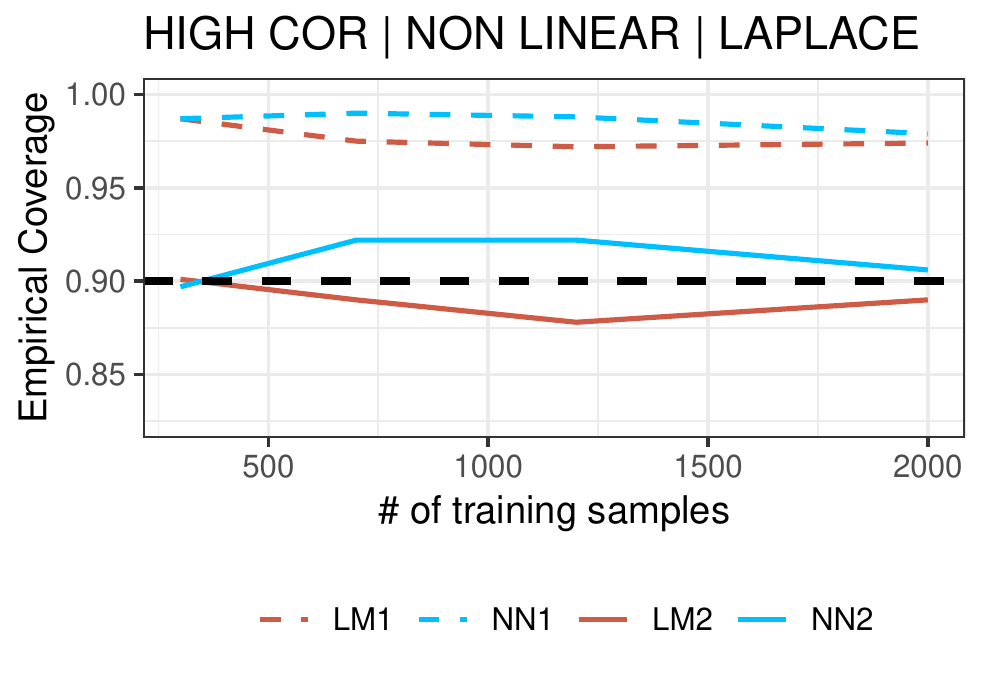}}
\end{minipage}%
\begin{minipage}{.5\textwidth}
  \centering
  \centerline{\includegraphics[trim=0 40 0 0, clip, width=1\textwidth]{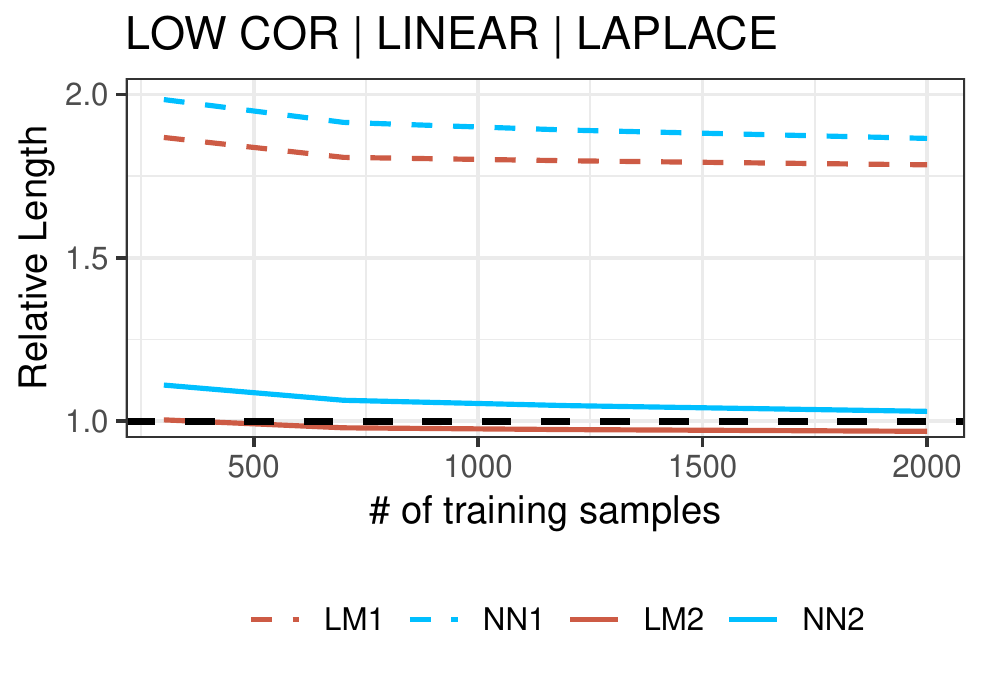}}
  \centerline{\includegraphics[trim=0 40 0 0, clip, width=1\textwidth]{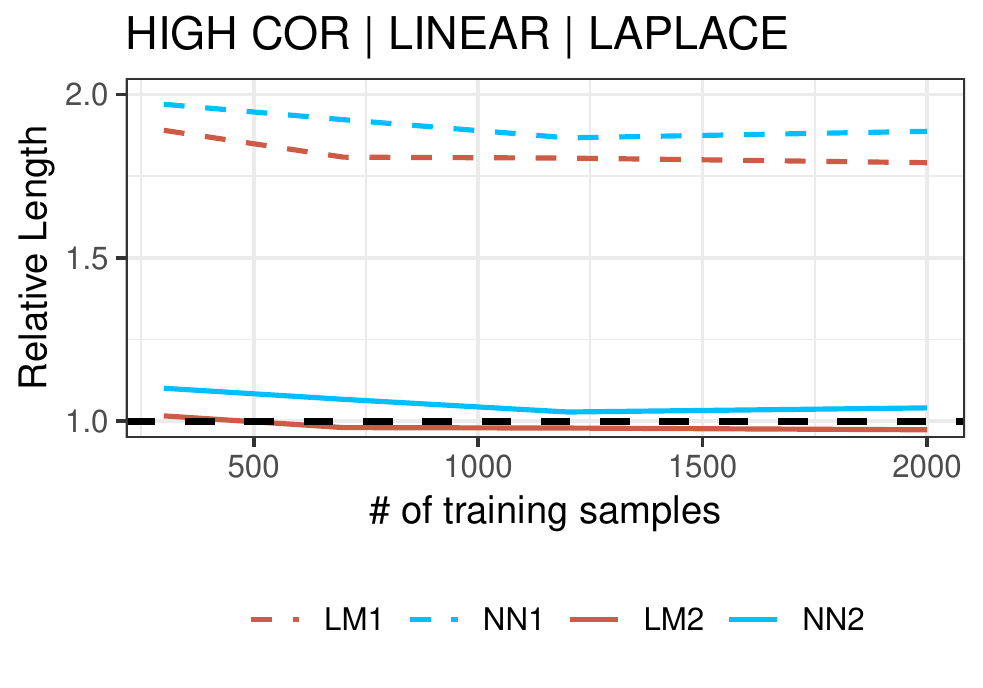}}
  \centerline{\includegraphics[trim=0 40 0 0, clip, width=1\textwidth]{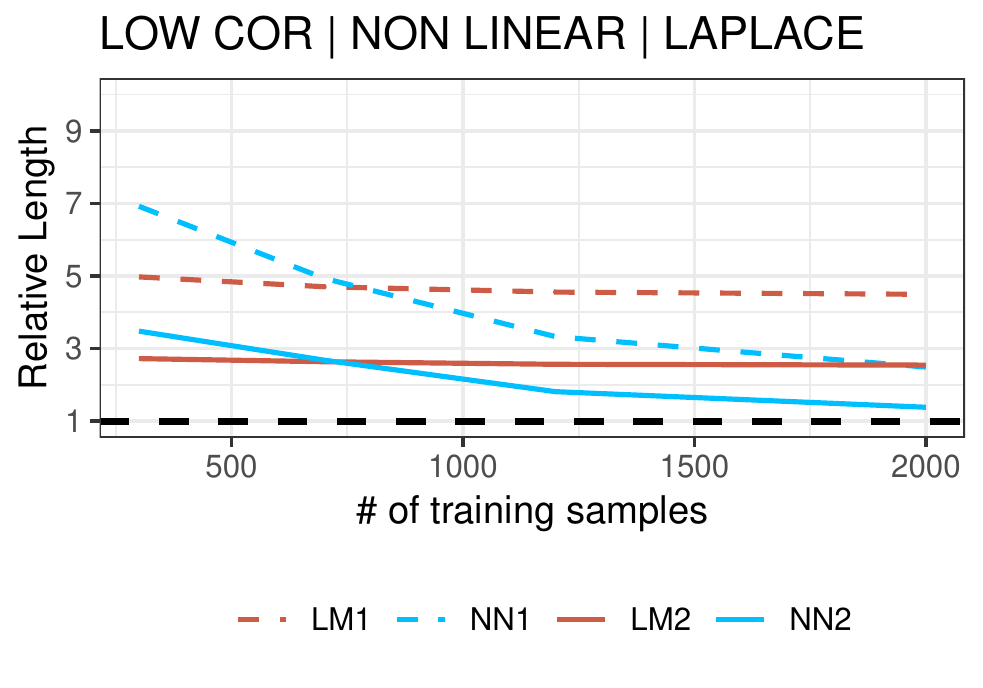}}
  \centerline{\includegraphics[trim=0  0 0 0, clip, width=1\textwidth]{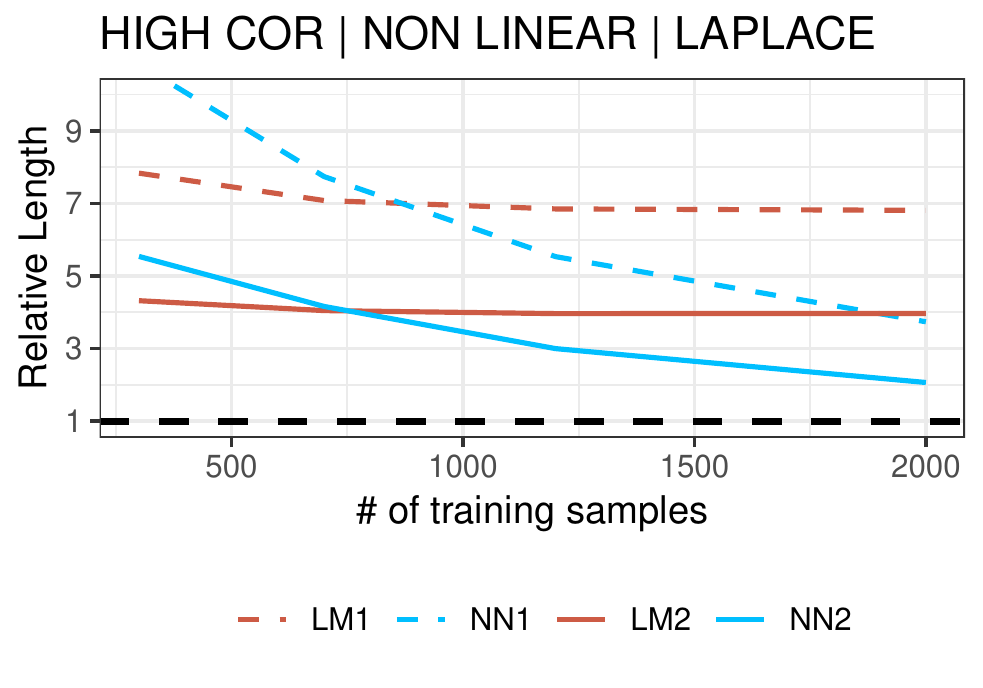}}
\end{minipage}
\caption{The results of the same simulation settings as in Figure 1 are shown but with Laplace error distribution instead of Gaussian error distribution.}
\end{figure*}

\newpage

\begin{appendices}

\section{Proofs of the results} \label{sec_proof}

\subsection{Proof of Theorem \ref{th_te1} and \ref{th_te1_b}}

\begin{proof}[Proof of Theorem \ref{th_te1}]

  By definition of $\Gamma_{D_n}(X)$, we have
  \begin{align*}
    & \left\{ f(X,-1) + \epsilon_{X,-1} \in [l_{D_n}(X,-1), u_{D_n}(X,-1)] \right\} \\ 
    &\cap ~ \left\{ f(X,1) + \epsilon_{X,1} \in [l_{D_n}(X,1), u_{D_n}(X,1)] \right\} \\
    &\subseteq \left\{ \tau(X) \in \Gamma_{D_n}(X) \right\}.
  \end{align*}
  This means, it is sufficient to show that the probability of the smaller event is larger than $1-\alpha$. For $t \in \{-1,1\}$, we have
  \begin{align*}
    &\mathbb P \left(Y \in [l_{D_n}(X,T), u_{D_n}(X,T)] ~ | ~ T=t \right) \\
    &= \mathbb P\left( f(X,t) + \epsilon_{X,t} \in [l_{D_n}(X,t), u_{D_n}(X,t)] ~ | ~ T=t \right) \\
    &= \mathbb P\left( f(X,t) + \epsilon_{X,t} \in [l_{D_n}(X,t), u_{D_n}(X,t)] \right),
  \end{align*}
  where the last equation follows because $T$ is independent of $D_n$ and $X$. If $\mathcal P_n$ is a $T$-conditional prediction interval procedure at level $1-\alpha/2$, the previous equation implies that 
  \begin{equation*}
    \mathbb P\left( f(X,t) + \epsilon_{X,t} \in [l_{D_n}(X,t), u_{D_n}(X,t)] \right) > 1-\alpha/2
  \end{equation*}
  for $t \in \{-1,1\}$. Hence the probability of the intersection is bounded from below by $1-\alpha$.

\end{proof}

\begin{proof}[Proof of Theorem \ref{th_te1_b}]
  Set
  \begin{equation*}
    A_t = \{f(X,t) + \epsilon_{X,t} \in [l_{D_n}(X,t), u_{D_n}(X,t)]\}.
  \end{equation*}
  Because of the observation at the beginning of the previous proof, it is sufficient to show that
  \begin{equation*}
    \mathbb P(A_{-1} \cap A_{1}) = \mathbb E \left[\mathbb P\left(A_{-1} \cap A_{1} ~ | ~ X, D_n \right) \right].
  \end{equation*}
  is bounded from below by $1-\alpha$. Observe that, conditional on $X$ and $D_n$, the only random quantities in $A_{-1}$ and $A_1$ are $e_{X,-1}$ and $e_{X,1}$, respectively. Because $e_{X,-1}$ and $e_{X,1}$ are conditionally independent given $X$, it follows that
  \begin{equation*}
    \mathbb P \left ( A_{-1} \cap A_{1} ~ | ~ X, D_n  \right ) = \mathbb P \left (A_{-1} ~ | ~ X, D_n  \right) \mathbb P \left (A_{1} ~ | ~ X, D_n  \right )
  \end{equation*}
  and therefore it is sufficient to show that
  \begin{equation*}
    \mathbb E \left[\mathbb P\left(A_{-1}~ | ~ X, D_n \right) \mathbb P\left( A_{1} ~ | ~ X, D_n \right) \right].
  \end{equation*}
  is bounded from below by $1-\alpha$. Because the product function is convex on $[0,1]^2$, the Jensen inequality implies that the previous expression is bounded from below by
  \begin{equation*}
        \mathbb E \left[\mathbb P\left(A_{-1}~ | ~ X, D_n \right) \right] \mathbb E \left [\mathbb P\left( A_{1} ~ | ~ X, D_n \right) \right] = \mathbb P(A_{-1}) \mathbb P(A_1) \geq 1-\alpha. 
  \end{equation*}
\end{proof}

\subsection{Proof of Theorem \ref{th_te2}}

Before we prove the theorem, we need the following two lemmas.

\begin{lemma} \label{le_a1}
  Let $(W_1,W_2)'$ be a 2-dimensional centered Gaussian random vector with $\mathrm{Var}(W_1)=\mathrm{Var}(W_2)$ and non-negative covariance. For any $l_1,l_2 \leq 0$ and $u_1,u_2 \geq 0$, we have
  \begin{align*}
    \mathbb P \left(\frac{l_1 - u_2}{\sqrt{2}}  \leq W_1 - W_2 ~ \leq ~ \frac{u_1 - l_2}{\sqrt{2}}\right) ~ &\geq ~ \frac{1}{2} ~ \mathbb P \left(l_1 \leq W_1 \leq u_1 \right) \\
    &+ ~ \frac{1}{2} ~ \mathbb P \left(l_2 \leq W_2 \leq u_2 \right).
  \end{align*}
\end{lemma}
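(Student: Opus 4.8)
The plan is to first reduce the claim to the case of independent $W_1,W_2$ by a monotonicity argument in the correlation, and then to derive the resulting inequality from the concavity of the standard normal distribution function $\Phi$ on $[0,\infty)$.

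Write $\sigma^2=\mathrm{Var}(W_1)=\mathrm{Var}(W_2)$ (assume $\sigma>0$, the degenerate case being trivial) and let $\rho\in[0,1]$ denote the correlation, which is nonnegative by assumption. Then $W_1-W_2$ is centered Gaussian with variance $2\sigma^2(1-\rho)$, so writing $W_1-W_2=\sigma\sqrt{2(1-\rho)}\,Z$ with $Z\sim N(0,1)$, the left-hand side equals
\[
  \Phi\!\left(\frac{u_1-l_2}{2\sigma\sqrt{1-\rho}}\right)-\Phi\!\left(\frac{l_1-u_2}{2\sigma\sqrt{1-\rho}}\right).
\]
Here I would use that $l_1-u_2\le 0\le u_1-l_2$ (immediate from $l_1,l_2\le 0\le u_1,u_2$): as $\rho$ increases the upper argument increases and the lower argument decreases, so this expression is nondecreasing in $\rho$ and hence minimized at $\rho=0$. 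Since the right-hand side does not depend on $\rho$, it suffices to prove the inequality when $W_1$ and $W_2$ are independent.

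At $\rho=0$ the left-hand side is $\Phi\!\big(\tfrac{u_1-l_2}{2\sigma}\big)-\Phi\!\big(\tfrac{l_1-u_2}{2\sigma}\big)$, while the right-hand side is $\tfrac12[\Phi(u_1/\sigma)-\Phi(l_1/\sigma)]+\tfrac12[\Phi(u_2/\sigma)-\Phi(l_2/\sigma)]$. Introducing the nonnegative quantities $a_i=u_i/\sigma$ and $b_i=-l_i/\sigma$ and using $\Phi(-x)=1-\Phi(x)$, the additive constants cancel and the claim reduces to
\[
  \Phi\!\left(\frac{a_1+b_2}{2}\right)+\Phi\!\left(\frac{a_2+b_1}{2}\right)\ \ge\ \tfrac12\big[\Phi(a_1)+\Phi(b_2)\big]+\tfrac12\big[\Phi(a_2)+\Phi(b_1)\big].
\]
Because $\Phi$ is concave on $[0,\infty)$ and all the $a_i,b_i$ are nonnegative, midpoint concavity gives $\Phi(\tfrac{a_1+b_2}{2})\ge\tfrac12\Phi(a_1)+\tfrac12\Phi(b_2)$ and likewise $\Phi(\tfrac{a_2+b_1}{2})\ge\tfrac12\Phi(a_2)+\tfrac12\Phi(b_1)$; adding the two finishes the proof.

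I expect the only real subtlety to be organizational rather than technical: the main obstacle is recognizing that the factor $\sqrt2$ together with the equal-variance hypothesis is precisely what aligns the arguments of $\Phi$ so that termwise concavity applies, and that nonnegative covariance lets one push to the worst case $\rho=0$. Once these two structural facts are isolated, the remaining steps are routine manipulations of the Gaussian distribution function.
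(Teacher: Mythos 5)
Your proposal is correct and follows essentially the same route as the paper's proof: the non-negative covariance is used to reduce to the worst case $\mathrm{Var}(W_1-W_2)=2\sigma^2$ (the paper states this as a variance bound rather than monotonicity in $\rho$, but it is the same step), and the resulting inequality is then obtained from midpoint concavity of $\Phi$ on $[0,\infty)$ applied to the pairs $(u_1,-l_2)$ and $(u_2,-l_1)$. The only cosmetic difference is your explicit substitution $a_i=u_i/\sigma$, $b_i=-l_i/\sigma$, which the paper handles implicitly via $\Phi(-x)=1-\Phi(x)$.
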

\begin{proof}
  We set $\sigma^2 = \mathrm{Var}(W_1)=\mathrm{Var}(W_2)$. Because the covariance of $W_1$ and $W_2$ is non-negative, it follows that the variance of $W_1-W_2$ is bounded from above by $2 \sigma^2$. This implies that the probability on the l.h.s. of the inequality of the lemma
  is bounded from below by
  \begin{equation*}
    \Phi\left(\frac{u_1-l_2}{2 \sigma} \right) - \Phi\left(\frac{l_1-u_2}{2 \sigma} \right) = \Phi\left(\frac{u_1-l_2}{2 \sigma} \right) + \Phi\left(\frac{u_2-l_1}{2 \sigma} \right) - 1,
  \end{equation*}
  where $\Phi(x)$ is the cumulative distribution function of the standard normal distribution. Because $\Phi(x)$ is concave on the non-negative real numbers and $u_1$ and $-l_1$ are non-negative, it follows that
  \begin{equation*}
    \Phi\left(\frac{u_1-l_2}{2 \sigma} \right) \geq \frac{1}{2}\Phi\left( \frac{u_1}{\sigma} \right) + \frac{1}{2}\Phi\left( \frac{-l_2}{\sigma} \right).
  \end{equation*}
  and likewise for the second summand on the r.h.s. of the previous equation. But this implies the inequality of the lemma.
\end{proof}
If the random variables $W_1$ and $W_2$ are negatively correlated, an inspection of the proof shows that the lemma continuous to hold if we replace $(l_1-u_2) / \sqrt{2}$ and $(u_1-l_2)/ \sqrt{2}$ by $l_1-u_2$ and $u_1-l_2$ on the l.h.s. of the inequality, respectively.

\begin{lemma} \label{le_a2}
  Let $W$ be a centered Gaussian distribution. Let $l,u \in \mathbb{R}$ with $l \leq u$. Then for all $\epsilon > 0$, we have
  \begin{equation*}
    \mathbb P(l + \epsilon \leq W \leq u - \epsilon) \geq \mathbb P (l \leq W \leq u) - 2\mathbb P(-\epsilon/2 \leq W \leq \epsilon/2).
  \end{equation*}
\end{lemma}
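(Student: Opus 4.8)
The plan is to reduce the claimed inequality to a statement about the probability mass that a centered Gaussian assigns to intervals of a \emph{fixed} length, and then exploit symmetry and unimodality of the Gaussian density. Writing $F$ for the distribution function of $W$, the assertion is equivalent to
\begin{equation*}
  \mathbb P(l \leq W \leq u) - \mathbb P(l+\epsilon \leq W \leq u-\epsilon) \leq 2\,\mathbb P(-\epsilon/2 \leq W \leq \epsilon/2).
\end{equation*}
First I would dispose of the degenerate case $\mathrm{Var}(W)=0$, where $W=0$ almost surely and the right-hand side equals $2$, so the bound is trivial. Hence I may assume $W$ has a strictly positive variance and therefore a density $\phi$ that is symmetric about $0$ and strictly decreasing in $|w|$.

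The heart of the argument is the elementary fact that, among all intervals of fixed length $\epsilon$, the one centered at the mode $0$ carries the most Gaussian mass: for every $a \in \mathbb R$,
\begin{equation*}
  \mathbb P(a \leq W \leq a+\epsilon) \leq \mathbb P(-\epsilon/2 \leq W \leq \epsilon/2).
\end{equation*}
To prove this I would set $h(a) = \int_a^{a+\epsilon} \phi(w)\,dw$, observe that $h'(a) = \phi(a+\epsilon) - \phi(a)$, and use symmetry together with strict unimodality to conclude $h'(a) > 0$ for $a < -\epsilon/2$ and $h'(a) < 0$ for $a > -\epsilon/2$. Thus $h$ attains its maximum at $a = -\epsilon/2$, which is exactly the stated bound.

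With this fact in hand the proof splits into two cases. If $u-\epsilon \geq l+\epsilon$, the left-hand side telescopes into
\begin{equation*}
  \mathbb P(l \leq W \leq l+\epsilon) + \mathbb P(u-\epsilon \leq W \leq u),
\end{equation*}
and since each summand is the mass of an interval of length $\epsilon$, the key fact bounds each by $\mathbb P(-\epsilon/2 \leq W \leq \epsilon/2)$, producing the factor $2$. If instead $u-\epsilon < l+\epsilon$, the middle interval is empty, so $\mathbb P(l+\epsilon \leq W \leq u-\epsilon)=0$; then $[l,u] \subseteq [l,l+\epsilon] \cup [u-\epsilon,u]$, and a union bound followed by the key fact again yields the factor $2$.

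I expect the only genuine obstacle to be the careful justification of the key fact, namely the sign analysis of $h'$ that pins the maximum at the centered interval; this relies on strict unimodality of $\phi$. Everything else is routine bookkeeping with the two cases and the union bound.
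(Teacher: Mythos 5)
Your proof is correct and follows essentially the same route as the paper's: decompose the mass of $[l,u]$ lost by shrinking to $[l+\epsilon,u-\epsilon]$ into the two edge intervals $[l,l+\epsilon]$ and $[u-\epsilon,u]$, then bound each by $\mathbb P(-\epsilon/2 \leq W \leq \epsilon/2)$ using that an interval of length $\epsilon$ carries maximal Gaussian mass when centered at the mode. You merely spell out details the paper leaves implicit (the sign analysis of $h'$, the case $u-\epsilon < l+\epsilon$, and the degenerate-variance case), which is fine.
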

\begin{proof}
  We can write the probability on the r.h.s. as 
  \begin{equation*}
    \mathbb P (l \leq W \leq u) - \mathbb P (u-\epsilon \leq W \leq u) - \mathbb P (l \leq W \leq l + \epsilon).
  \end{equation*}
  Because the p.d.f. of a centered Gaussian distribution is symmetric around $0$ and is strictly decreasing on the positive reals, it follows that the second and third probability are bounded from above by $\mathbb P (-\epsilon/2 \leq W \leq \epsilon/2)$.
\end{proof}

We continue now with the proof of Theorem \ref{th_te2}.

\begin{proof}
  Fix an $\epsilon > 0$. Let $K \subset \mathbb R^d$ be compact and set
  \begin{equation*}
      B_1 = \{X \in K\}
  \end{equation*}
  and
  \begin{equation*}
      B_{2,n} = \left \{ \sup_{(x,t) \in K \times \{-1,1\}} |f_{D_n}(x,t) - f(x,t)| < \epsilon \right \}.
  \end{equation*}
  Since $X$ is stochastically bounded (it is $\mathbb R^d$-valued), we can choose $K$ large enough such that $\mathbb P(B_1)$ is arbitrarily close to 1. Because $\mathcal A_n$ is a consistent prediction procedure, we can choose $n$ large enough such that $\mathbb P(B_{2,n})$ is arbitrarily close to 1. This means, we can choose $K$ and $n$ large enough such that $\mathbb P(B_1 \cap B_{2,n})$ is arbitrarily close to $1$. Note that proving $\mathbb P(\tau(X) \in \Gamma_{D_n} (X)) \geq 1-\alpha$ as $n \to \infty$ is equivalent to proving that
  \begin{equation*}
    \mathbb P \left(\tau(X) \in \Gamma_{D_n} (X) ~ | ~ B_1 \cap B_{2,n} \right)
  \end{equation*}
  is bounded from below by $1-\alpha$ as $\mathbb P(B_1 \cap B_{2,n})$ converges to 1. Observe that we can write the conditional probability as
  \begin{equation*}
    \frac{\mathbb E \left[ \mathbb E [\mathds{1}_{\{\tau(X) \in \Gamma_{D_n} (X) \}} \mathds{1}_{B_1 \cap B_{2,n}} ~ | ~ X, D_n] \right]}{\mathbb P (B_1 \cap B_{2,n})}.
  \end{equation*}
  Because the event $B_1 \cap B_{2,n}$ only depends on $X$ and $D_n$ (and is therefore measurable with respect to the $\sigma$-Algebra generated by $X$ and $D_n$), the conditional expectation in the numerator is equal to $\mathbb P(\tau(X) \in \Gamma_{D_n} (X) | X, D_n)\mathds{1}_{B_1 \cap B_{2,n}}$. But this implies that the conditional probability is equal to
  \begin{equation} \label{eq_p1}
      \mathbb E \left[ \mathbb P(\tau(X) \in \Gamma_{D_n} (X) ~ | ~ X, D_n) | B_1 \cap B_{2,n} \right].
  \end{equation}
  Observe that
  \begin{align*}
    \{\tau(X) \in \Gamma_{D_n} (X)\} ~ = ~ \bigg \{ & (\tilde l_{D_n}(X,1) - f(X,1))  - (\tilde u_{D_n}(X,-1) - f(X,-1)) \\
    & \leq ~ e_{X,1} - e_{X,-1} ~ & \\
    & \leq  ~ (\tilde u_{D_n}(X,1) - f(X,1))  - (\tilde l_{D_n}(X,-1) - f(X,-1)) \bigg \}.    
  \end{align*}
  On the event $B_1 \cap B_{2,n}$, we have $|f(X,t)-f_{D_n}(X,t)| < \epsilon$ for all $t \in \{-1,1\}$. This means that (\ref{eq_p1}) becomes only smaller if we replace $\{\tau(X) \in \Gamma_{D_n} (X)\}$ by
  \begin{align*}
    \bigg \{ & (\tilde l_{D_n}(X,1) - f_{D_n}(X,1))  - (\tilde u_{D_n}(X,-1) - f_{D_n}(X,-1)) + 2\epsilon  \\
    & \leq ~ e_{X,1} - e_{X,-1} ~ & \\
    & \leq  ~ (\tilde u_{D_n}(X,1) - f_{D_n}(X,1))  - (\tilde l_{D_n}(X,-1) - f_{D_n}(X,-1)) - 2\epsilon \bigg \}.
  \end{align*}
  By definition of $\tilde l_{D_n}(X,t)$ and $\tilde u_{D_n}(X,t)$, $t \in \{-1,1\}$, the previous set is equal to
  \begin{align*}
    \bigg \{ & - \frac{f_{D_n}(X,1) - l_{D_n}(X,1))}{\sqrt{2}}  - \frac{u_{D_n}(X,-1) - f_{D_n}(X,-1)}{\sqrt{2}} + 2\epsilon  \\
    & \leq ~ e_{X,1} - e_{X,-1} ~ & \\
    & \leq  ~ \frac{u_{D_n}(X,1) - f_{D_n}(X,1)}{\sqrt{2}} + \frac{f_{D_n}(X,-1)-l_{D_n}(X,-1)}{\sqrt{2}} - 2\epsilon \bigg \}.
  \end{align*}
    Observe that conditional on $X$ and $D_n$, the only random quantities remaining are $e_{X,-1}$ and $e_{X,1}$. For $t \in \{-1,1\}$, we have by assumption that $e_{X,t}$ given $X$ is a centered Gaussian. Because $D_n$ is independent of $e_{X,t}$, the same follows for $e_{X,t}$ given $X$ and $D_n$. Therefore, Lemma \ref{le_a2} implies that (\ref{eq_p1}) becomes only smaller if we replace the conditional probability in (\ref{eq_p1}) by
  \begin{align*}
    \mathbb P \bigg( & -\frac{f_{D_n}(X,1) - l_{D_n}(X,1))}{\sqrt{2}}  - \frac{u_{D_n}(X,-1) - f_{D_n}(X,-1)}{\sqrt{2}}\\
    &\leq ~ e_{X,1} - e_{X,-1} \\
    &\leq ~ \frac{u_{D_n}(X,1) - f_{D_n}(X,1)}{\sqrt{2}} + \frac{f_{D_n}(X,-1)-l_{D_n}(X,-1)}{\sqrt{2}} ~ | ~ X,D_{n} \bigg) \\
    &+ 2\mathbb P \left(-\epsilon \leq e_{X,1} - e_{X,-1} \leq \epsilon ~ | ~ X,D_{n} \right).
  \end{align*}
  In view of (\ref{cover}), we have $f_{D_n}(X,t) - l_{D_n}(X,t) \geq 0$ a.s. and $u_{D_n}(X,t) - f_{D_n}(X,t) \geq 0$ a.s. for all $t \in \{-1,1\}$. By assumption, $\epsilon_{X,-1}$ and $\epsilon_{X,1}$, conditional on $X$, are centered multivariate Gaussian with the same variance and non-negative correlation. Because $D_n$ is independent of $e_{X,t}$, the same follows for $\epsilon_{X,-1}$ and $\epsilon_{X,1}$ given $X$ and $D_n$. Lemma \ref{le_a1} implies that the first summand in the previous expression is bounded from below by
  \begin{equation*}
    \frac{1}{2} \sum_{t \in \{-1,1\}} \mathbb P \left(l_{D_n}(X,t) - f_{D_n}(X,t)  \leq e_{X,t} \leq u_{D_n}(X,t) - f_{D_n}(X,t) ~ | ~ X,D_n \right).
  \end{equation*}
  Because $|f(X,t)- f_{D_n}(X,t)| < \epsilon$ on the event $B_{1} \cap B_{2,n}$, conditional on $B_{1} \cap B_{2,n}$, the previous sum is bounded from below by
  \begin{align*}
    &\frac{1}{2} \sum_{t \in \{-1,1\}} \mathbb P \left(l_{D_n}(X,t) - f(X,t)  \leq e_{X,t} \leq u_{D_n}(X,t) - f(X,t) ~ | ~ X,D_n \right) \\
    &- ~ \sum_{t \in \{-1,1\}} \mathbb P \left(-\epsilon/2 \leq e_{X,t} \leq \epsilon/2 ~ | ~ X,D_n \right).
  \end{align*}    
  (We replace the first $f_{D_n}(X,t)$ by $f(X,t)-\epsilon$ and the second by $f(X,t)+\epsilon$ and then apply Lemma \ref{le_a2}.) Using the same argumentation as in the paragraph above (\ref{eq_p1}) (only in the reverse direction), we can conclude that (\ref{eq_p1}) is bounded from below by
  \begin{align*}
    &\frac{1}{2} \sum_{t \in \{-1,1\}}  \mathbb P \left(l_{D_n}(X,t) - f(X,t)  \leq e_{X,t} \leq u_{D_n}(X,t) - f(X,t) ~ | ~ B_1 \cap B_{2,n} \right) \\
    &- ~ \sum_{t \in \{-1,1\}} \mathbb P \left(-\epsilon/2 \leq e_{X,t} \leq \epsilon/2 ~ | ~ B_1 \cap B_{2,n} \right) \\
    &- 2 \mathbb P \left(-\epsilon \leq e_{X,1} - e_{X,-1} \leq \epsilon ~ | ~ B_1 \cap B_{2,n} \right).
  \end{align*}
  Since we want to show that this is bounded from below by $1-\alpha$ as $\mathbb P (B_1 \cap B_{2,n})$ converges to $1$, it is equivalent to showing that
    \begin{align*}
    &\frac{1}{2} \sum_{t \in \{-1,1\}}  \mathbb P \left(l_{D_n}(X,t) - f(X,t)  \leq e_{X,t} \leq u_{D_n}(X,t) - f(X,t)  \right) \\
    &- ~ \sum_{t \in \{-1,1\}} \mathbb P \left(-\epsilon/2 \leq e_{X,t} \leq \epsilon/2 \right) \\
    &- 2 \mathbb P \left(-\epsilon \leq e_{X,1} - e_{X,-1} \leq \epsilon \right).
  \end{align*}
  is bounded from below by $1-\alpha$ as $n \to \infty$. Observe that the first summand is equal to
  \begin{equation*}
      \frac{1}{2} \sum_{t \in \{-1,1\}}  \mathbb P \left(Y \in [l_{D_n}(X,t),u_{D_n}(X,t)] ~ | ~ T=t \right)
  \end{equation*}
  which is bounded from below by $1-\alpha$ for every $n$ by assumption. Since $(e_{X,-1},e_{X,1})'$ conditional on $X$ is multivariate Gaussian, it follows that $(e_{X,-1},e_{X,1})'$ has no point mass unconditionally. Because $\epsilon$ was arbitrary, we can make the second and third summand arbitrarily small. 
  
\end{proof}

\subsection{Proofs of Theorem \ref{th_full} and \ref{th_split}}

\begin{proof}[Proof of Theorem \ref{th_full}]
    Set 
  \begin{equation*}
    N_t = \sum_{i = 1}^n \mathds{1}_{\{T_i = t\}}.
  \end{equation*}
  Observe that $N_t$ is $\mathrm{Binom}(n,p_t)$-distributed. Conditional on $\{N_t, T=t\}$, the set $D_{n+1,T}(Y)$ has cardinality $N_t+1$ and the conditional joint distribution of the random variables in $m_{D_{n+1}(Y)}(D_{n+1,T}(Y))$ is exchangeable. This implies that the conditional distribution of $R_{D_{n+1}(Y)}(Y)$ is a discrete uniform distribution on $\{1,\dots,N_t+1\}$. Hence, it follows that
  \begin{align*}
    1-\alpha ~ &\leq ~ \frac{\lceil (1-\alpha) (N_t+1) \rceil}{N_t + 1} \\
    &= ~ \mathbb P \left(R_{D_{n+1}(Y)}(Y) \leq \lceil (1-\alpha) (N_t+1) \rceil ~ | ~ N_t,T=t \right) \\
    &\leq ~ 1-\alpha + \frac{1}{N_t+1}
  \end{align*}
  This means that the conditional probability
  \begin{equation*}
    \mathbb P \left(R_{D_{n+1}(Y)}(Y) \leq \lceil (1-\alpha) N_t \rceil ~ | ~ T=t \right)
  \end{equation*}
  is also bounded from below by $1-\alpha$. For the upper bound, we need to compute the first moment of $(N_t+1)^{-1}$, where $N_t$ is $\mathrm{Binom}(n,p_t)$-distributed. \cite{chao1972} showed that this is equal to $(1-(1-p_t)^{n+1}) / ((n+1)p_t)$.
\end{proof}

\begin{proof}[Proof of Theorem \ref{th_split}]
    Set 
  \begin{equation*}
    \tilde N_t = \sum_{i = m+1}^n \mathds{1}_{\{T_i = t\}}.
  \end{equation*}
  Observe that $\tilde N_t$ is $\mathrm{Binom}(n-m,p_t)$-distributed. Conditional on $\{\tilde N_t, T=t\}$, the set $D_{m:(n+1),T}(Y)$ has cardinality $\tilde N_t+1$ and the random variables in $m_{D_m}(D_{m:(n+1),T}(Y))$ are exchangeable because the dataset $D_{m}$ is independent of the random variables in $D_{m:(n+1),T}(Y))$. This implies that the conditional distribution of $R_{D_m}(Y)$ is a discrete uniform on $\{1,\dots, \tilde N_t+1\}$ and therefore
  \begin{align*}
    1-\alpha ~ &\leq ~ \frac{\lceil (1-\alpha) (\tilde N_t+1) \rceil}{\tilde N_t + 1} \\
    &= ~ \mathbb P \left(R_{D_m}(Y) \leq \lceil (1-\alpha) (\tilde N_t+1) \rceil ~ | ~ \tilde N_t,T=t \right) \\
    &\leq 1-\alpha + \frac{1}{(\tilde N_t+1)}.
  \end{align*} 
  The claim now follows by the same argument as in the proof of Theorem \ref{th_full}.
\end{proof}

\end{appendices}

\bibliographystyle{apalike}
\bibliography{literatur}

\end{document}